\newtheorem{definition}{Definition}
\newtheorem{lemma}{Lemma}
\newtheorem{corollary}{Corollary}
\lstdefinelanguage{ADL}{
	keywords = [1]{struct, int, bool, nat, String, if, then, Fix, else},
	morekeywords=[2]{null, this, true, false},
	sensitive=false, 
	morecomment=[l]{//},
	morecomment=[s]{/*}{*/},
	morestring=[b]"
}
\lstdefinelanguage{gmcommands}{
	morekeywords = {push, rd, wr, cons, if, not, op, this, Fix},
	sensitive=true, 
	morecomment=[l]{//},
	morecomment=[s]{/*}{*/},
	morestring=[b]"
}
\definecolor{codegray}{rgb}{0.5,0.5,0.5}
\tiny\color{codegray},
\lstdefinestyle{gmcommands}{
	language={gmcommands},
	showstringspaces=false,
	basicstyle=\small,
	keywordstyle=\textbf,
	numberstyle=\tiny\color{codegray},
	stringstyle=\slshape,
	commentstyle=\color{codegray},
	emph={
		val
	},
	emphstyle = \rmfamily\itshape,
	breaklines=true
}
\lstdefinestyle{CUDA}{
	language=[ANSI]C++,
	showstringspaces=false,
	basicstyle=\small,
	keywordstyle=\color{MidnightBlue},
	numberstyle=\tiny\color{codegray},
	stringstyle=\slshape,
	commentstyle=\color{codegray},
	emph={
		cudaMalloc, cudaFree, cudaMemcpy, cudaMemcpyHostToDevice, cudaMemcpyDeviceToHost, cudaDeviceSynchronize,
		__global__, __shared__, __device__, __host__,
		__syncthreads,
	},
	emphstyle = \color{OliveGreen},
	breaklines=true
}
\newcommand{\etal}{\emph{et al.}}
\newcommand{\Lname}{AuDaLa\xspace}
\newcommand{\rarr}{\rightarrow}
\newcommand{\vempty}[0]{\varepsilon}
\DeclareMathSymbol{\sm}{\mathbin}{AMSa}{"39}
\newcommand{\nil}[0]{\mathit{null}}
\newcolumntype{?}{!{\vrule width 1.5pt}}
\newcommand{\Sched}[0]{\mathit{Sc}}
\newcommand{\Structs}[0]{\sigma}
\newcommand{\Env}[0]{\xi}
\newcommand{\Labels}[0]{\mathcal{L}}
\newcommand{\Id}{\mathit{ID}}
\newcommand{\defaultVal}{\mathit{defaultVal}}
\newcommand{\Literals}[0]{\mathit{LT}}
\newcommand{\Program}[0]{\mathcal{P}}
\newcommand{\StructType}[0]{\mathit{sL}}
\newcommand{\SynTypes}[0]{\mathcal{T}}
\newcommand{\interp}[1]{\llbracket #1 \rrbracket}
\newcommand{\true}[0]{\mathit{true}}
\newcommand{\false}[0]{\mathit{false}}
\newcommand{\Stab}[0]{\mathit{s\chi}}
\newcommand{\Stack}[0]{\chi}
\newcommand{\readg}[1]{\textbf{rd}(#1)}
\newcommand{\writev}[1]{\textbf{wr}(#1)}
\newcommand{\cons}[1]{\textbf{cons}(#1)}
\newcommand{\push}[1]{\textbf{push}(#1)}
\newcommand{\Operator}[0]{\textbf{op}}
\newcommand{\Ifc}[1]{\textbf{if}(#1)}
\newcommand{\Notc}[0]{\textbf{not}}
\newcommand{\this}[0]{\textbf{this}}
\newcommand{\ComList}[0]{\gamma}
\newcommand{\Par}[1]{\mathit{Par}_{#1}}
\newcommand{\val}[1]{\mathit{val}(#1)}
\newcommand{\impl}{P_{\mathit{TS}}}
\def\CC{{C\nolinebreak[4]\hspace{-.05em}\raisebox{.4ex}{\tiny\bf ++}}}
\begin{document}
\title{\Lname is Turing Complete}
%
%
\author{Tom T.P. Franken, Thomas Neele}
\date{April 19, 2024}
%
%
\maketitle              
\begin{abstract}
	\Lname is a recently introduced programming language that follows the new data autonomous paradigm.
	In this paradigm, small pieces of data execute functions autonomously.
	Considering the paradigm and the design choices of \Lname, it is interesting to determine the expressiveness of the language and to create verification methods for it.
	In this paper, we take our first steps to such a verification method by implementing Turing machines in \Lname and proving that implementation correct. This also proves that \Lname is Turing complete.
\end{abstract}
\section{Introduction}
Nowadays, performance gains are increasingly obtained through parallelism.
The focus is often on how to get the hardware to process the program efficiently and languages are often designed around that, focusing on threads and processes. 
Recently, \Lname~\cite{franken-autonomous-2023} was introduced, which completely abstracts away from threads.
In \Lname, data is \emph{autonomous}, meaning that the data executes its own functions.
It follows the new data autonomous paradigm~\cite{franken-autonomous-2023}, which abstracts away from active processor and memory management for parallel programming and instead focuses on the innate parallelism of data. 
This paradigm encourages parallelism by making running code in parallel the default setting, instead of requiring functions to be explicitly called in parallel. 
The paradigm also promotes separation of concerns and a bottom-up design process.
A compiler for \Lname~\cite{leemrijse2023} enables execution of \Lname on GPUs.

\Lname is built to be simple and focusses fully on parallel data elements.
This design principle relates \Lname to domain specific languages, which are often less expressive than general purpose languages.
It is therefore relevant to establish the expressiveness of \Lname, as \Lname is built as a general purpose language.
Additionally, establishing the expressiveness of \Lname also indicates how expressive the data-autonomous paradigm is.
\Lname has a fully defined semantics, unlike many other languages, which we can use to answer this question.

\emph{Turing completeness} is a well known property in computer science, which applies to a language or system that can simulate Turing machines.
As a Turing machine can compute all effectively computable functions following the Church-Turing thesis~\cite{copeland-church-turing-1997}, a Turing complete language or system can do the same.
Two approaches to showing Turing completeness are implementing a Turing machine in the target language~\cite{churchill-magic-2020,pitt-turing-2023} and implementing $\mu$-recursive functions~\cite{date-neuromorphic-2022,henderson-turing-2021}.

To prove \Lname's expressiveness, we prove \Lname Turing complete. We do this by implementing a Turing machine in \Lname (Section~\ref{sec: TMADL}). We then give the intuition of the proof that this implementation is correct (Section~\ref{sec: exe}). Constructing this implementation to exhibit correct behaviour is intricate due to \Lname's view on the behaviour of data elements and proofs (specifically those in Appendix~\ref{appendix}) involve detailed reasoning about the semantics and the inference rules defined in it and lay the foundation for proving \Lname programs correct.

\paragraph{Related Work.}
\Lname is a \emph{data-autonomous} language and related to other data-focussed languages, like standard data-parallel languages (CUDA~\cite{garland-parallel-2008} and OpenCL~\cite{chong-sound-2014}), languages which apply local parallel operations on data structures (Halide~\cite{ragan-kelley-halide-2017}, \textsc{ReLaCS}~\cite{raimbault-relacs-1993}) and actor-based languages (Ly~\cite{ungar-harnessing-2010}, A-NETL~\cite{baba--netl-1995}).

Though the expressivity of actor languages has been studied before~\cite{de-boer-decidability-2012} and there is research into suitable Turing machine-like models for concurrency~\cite{qu-parallel-2017, kozen-parallelism-1976, wiedermann-parallel-1984}, there does not seem to be a large focus on proving Turing completeness of parallel languages.
We estimate that this is because many of these languages extend other languages, e.g., CUDA and OpenCL are built upon \CC. 
For these languages, Turing completeness is inherited from their base language.
Furthermore, parallel domain specific languages such as Halide~\cite{ragan-kelley-halide-2017} are simple by design, only focussing on their domain.
Languages may also not be Turing complete on purpose~\cite{gibbons-functional-2015, deursen-little-1998}, for example to make automated verification decidable.

The proof for the Turing completeness of Circal~\cite{detrey-constructive-2002} follows the same line of our proof.
Other parallel systems that have been proven Turing complete include water systems~\cite{henderson-turing-2021} and asynchronous non-camouflage cellular automata~\cite{yamashita-turing-2017}.
	
\section{The Turing Machine Implementation}
\label{sec:preliminaries}
\subsection{Basic Concepts}\label{sec: TM}
We define a Turing machine following the definition of Hopcroft \etal~\cite{hopcroft-introduction-2001}.
Let $\mathbb{D} = \{L, R\}$ be the set of the two directions \textit{left} and \textit{right}. A Turing machine $T$ is a 7-tuple $T = (Q, q_0, F, \Gamma, \Sigma, B, \delta)$, with a finite set of control states $Q$, an initial state $q_0\in Q$, a set of accepting states $F\subseteq Q$, a set of tape symbols $\Gamma$, a finite set of input symbols $\Sigma\subseteq \Gamma$, a blank symbol $B\in \Gamma\setminus\Sigma$ (the initial symbol of all cells not initialized) and a partial transition function $\delta: (Q\setminus F) \times \Gamma \nrightarrow Q \times \Gamma \times \mathbb{D}$.

Every Turing machine $T$ operates on an infinite \textit{tape} divided into \textit{cells}. 
Initially, this tape contains an input string $S = s_0\ldots s_n$ with symbols from $\Sigma$, but is otherwise blank.
The cell the Turing machine operates on is called the \emph{head}. 
We represent the tape as a function $t:\mathbb{Z}\rarr\Gamma$, where cell $i$ contains symbol $t(i)\in\Gamma$. 
In this function, cell $0$ is the head, cells $i$ s.t. $i<0$ are the cells left from the head and cells $i$ s.t. $i>0$ are the cells right from the head.
We restrict ourselves to deterministic Turing machines.	
We also assume the input string is not empty, without loss of generality. 

We define a \emph{configuration} to be a tuple $(q, t)$, with $q$ the current state of the Turing machine and $t$ the current tape function.
Given input string $S = s_0\ldots s_n$, the \emph{initial configuration} of a Turing machine $T$ is $(q_0, t_S)$, with $q_0$ as defined for $T$, and $t_S(i) = s_i$ for $0\leq i \leq n$ and $t_S(i) = B$ otherwise.

During the execution, a Turing machine $T$ performs \emph{transitions}, defined as: 
\begin{definition}[Turing machine transition]\label{def: conf}
	Let $T = (Q, q_0, F, \Gamma, \Sigma, B, \delta)$ be a Turing machine and let $(q, t)$ be a configuration such that $\delta(q, t(0)) = (q', s', D)$, with $D\in \mathbb{D}$. Then $(q, t)\rarr (q', t')$, where $t'$ is defined as
	\begin{equation*}
			t'(i) = 
			\left\{\begin{array}{ll}
				s'&\text{if } i = 1\\
				t(i-1)\;&\text{otherwise}
			\end{array}\right.\text{ if $D{=}L$ and }\\
			t'(i) = 
			\left\{\begin{array}{ll}
				s'&\text{if } i = -1\\
				t(i+1)\;&\text{otherwise}
			\end{array}\right.\text{ if $D{=}R$.}
	\end{equation*}
\end{definition}

We say a Turing machine $T$ \emph{accepts} a string $S$ iff, starting from $(q_0, t_S)$ and taking transitions while possible, $T$ halts in a configuration $(q, t)$ s.t. $q\in F$.

\subsection{The Implementation of a Turing Machine in \Lname}\label{sec: TMADL}
In this section, we describe the implementation of a Turing machine in \Lname.
Let $T = (Q, \Sigma, \Gamma, \delta, q_0, B, F)$ be a Turing machine and $S$ an input string.
We implement $T$ and initialize the tape to $S$ in \Lname.
W.l.o.g., we assume that $Q\subseteq \mathbb{Z}$ with $q_0 = 0$ and that $\Gamma \subseteq \mathbb{Z}$ with $B = 0$.

An \Lname program contains three parts: the definitions of the data types and their parameters are expressed as \emph{structs}, functions to be executed in parallel are given to these data types as \emph{steps}, and these steps are ordered into the execution of a method by a \emph{schedule} separate from the data system.
\textit{Steps} cannot include loops, which are instead managed by the schedule.

We model a cell of $T$'s tape by a struct \textit{TapeCell}, with a left cell (parameter \textit{left}), a right cell (\textit{right}) and a cell symbol (\textit{symbol}).
The control of $T$ is modeled by a struct \textit{Control}, which saves a tape head (variable \textit{head}), a state $q\in Q$ (\textit{state}) and whether $q\in F$ (\textit{accepting}).
See Listing~\ref{ex:sched}.
\begin{lstlisting}[caption={The \Lname program structure}, label={ex:sched}]
struct (*\textit{TapeCell}*) ((*\textit{left}*): (*\textit{TapeCell}*), (*\textit{right}*): (*\textit{TapeCell}*), (*\textit{symbol}*): Int){} //def. of TapeCell
struct (*\textit{Control}*) ((*\textit{head}*): (*\textit{TapeCell}*), (*\textit{state}*): Int, (*\textit{accepting}*): Bool) {
	(*\textit{transition}*) {(*\color{gray} see Listing~\ref{ex:clause} and \ref{ex:transition}*)}      //definition of the step "transition"
	(*\textit{init}*) {(*\color{gray} see Listing~\ref{ex:init}*)}                         //definition of the step "init"
}
(*\textit{init}*) < Fix((*\textit{transition}*))       //schedule: run "init" once and then iterate "transition"\end{lstlisting}
The step \textit{transition} in the \textit{Control} struct models the transition function $\delta$.
For every pair $(q, s)\in Q\times \Gamma$ s.t. $\delta(q, s) = (q', s', D)$ with $D\in \mathbb{D}$, \textit{transition} contains a clause as shown in Listing~\ref{ex:clause} (assuming $D = R$). This clause updates the state and symbol, and saves whether the new state is accepting. It also moves the head and creates a new \textit{TapeCell} if there is no next element, which we check in line 5. For this, as $s$ can be $\nil$, we need to explicitly check whether \textit{head} is a $\nil$-element. Note that $B = 0$, and that if $D = L$ the code only minimally changes.
\begin{lstlisting}[float=t, caption={A clause for $\delta(q, s) = (q', s', R)$.}, label={ex:clause}]
if ((*\textit{state}*) == (*$q$*) && (*\textit{head.symbol}*) == (*$s$*)) then {
	(*\textit{head.symbol}*) := $s'$; //update the head symbol
	(*\textit{state}*) := $q'$; //update the state
	(*\textit{accepting}*) := $(q'\in F)$; //the new state is accepting or rejecting
	if ((*\textit{head}*) != null && (*\textit{head.right}*) == null) then {
		(*\textit{head.right}*) := (*\textit{TapeCell}*)((*\textit{head}*), null, 0); //call constructor to create a new TapeCell
	}
	(*\textit{head}*) := (*\textit{head.right}*); //move right
}\end{lstlisting}

The clauses for the transitions are combined using an if-else if structure (syntactic sugar for a combination of ifs and variables), so only one clause is executed each time \textit{transition} is executed.
See Listing~\ref{ex:transition}.
\begin{lstlisting}[float = t, caption={The \textit{transition} step. The shown pairs all have an output in $\delta$.}, label={ex:transition}]
(*\textit{transition}*) {
	if ((*\textit{state}*) == (*$q_1$*) && (*\textit{head.symbol}*) == (*$s_1$*)) then{ /*clause 1*/ }
	else if ((*\textit{state}*) == (*$q_2$*) && (*\textit{head.symbol}*) == (*$s_2$*)) then { /*clause 2*/ }
	else if ((*\textit{state}*) == (*$q_3$*) && (*\textit{head.symbol}*) == (*$s_3$*)) then { /*clause 3*/ }
	// etc.
} 	\end{lstlisting}
\begin{lstlisting}[float=t, caption={Initializing input string $S$.}, label={ex:init}]
(*\textit{init}*) {
	(*\textit{TapeCell}*) cell0 := (*\textit{TapeCell}*)(null, null, $s_0$); // initialize the tape
	(*\textit{TapeCell}*) cell1 := (*\textit{TapeCell}*)(null, null, $s_1$);
	(*\textit{TapeCell}*) cell2 := (*\textit{TapeCell}*)(null, null, $s_2$);
	cell1.(*\textit{left}*) := cell0; // connect the tape
	cell0.(*\textit{right}*) := cell1;
	cell2.(*\textit{left}*) := cell1;
	cell1.(*\textit{right}*) := cell2;
	(*\textit{Control}*)(cell0, 0, $(q_0\in F)$); //initialize the control
}\end{lstlisting}
In the step \textit{init} in the \textit{Control} struct, we create a \textit{TapeCell} for every symbol $s\in S$ from left to right, which are linked together to create the tape.
We also create a \textit{Control}-instance.
Listing~\ref{ex:init} shows this for an example tape $S = s_0,s_1,s_2$.

In the semantics of \Lname~\cite{franken-autonomous-2023}, the initial state of any program contains only the special $\nil$-element of each struct.
All parameters of the $\nil$-element are fixed to a $\nil$-value.
They can create other elements but cannot write to their own parameters.
Therefore, the call of \textit{init} in the schedule causes the $\nil$-element of \textit{Control} to initialize the tape.
It also initializes a single non-$\nil$ element of \textit{Control}.
The schedule will then have that element of \textit{Control} run the \textit{transition} step until the program stabilizes.
Listing~\ref{ex:sched} shows the final structure of the program.

\section{Turing Completeness}\label{sec: exe}
In this section, we show why \Lname is Turing Complete. 
We establish an equivalence between the configurations of a Turing machine and the configurations that can be extracted from the semantics of the corresponding \Lname program.
We use the fact that the steps executed by the implementation are deterministic, as there is at most one non-$\nil$ \textit{Control} structure that executes the steps.
We omit the full proof of Lemmas~\ref{contract: init} and \ref{contract: transition}, which can be found in Appendix~\ref{appendix}.

Henceforth, let $\impl$ be the implementation of a Turing machine $T$ and an input string $S= s_0\ldots s_n$ as specified in Section~\ref{sec: TMADL}.
In \Lname's semantics, a \emph{struct instance} is a data element instantiated from a struct during runtime.
For the proof we consider a specific kind of \Lname state, the \emph{idle state}, which has the property that none of its the struct instances are in the process of executing a step. In \Lname, the next step to be executed from an idle state is determined by the schedule.
With this we define \emph{implementation configurations}:
\begin{definition}[Implementation Configuration]\label{def: implconf}
	Let $P$ be an idle state of $\impl$ containing a single non-$\nil$ instance $c$ of \textit{Control}.
	Then we define the \emph{implementation configuration} of $P$ as a tuple $(q_P, t_P)$ s.t. $q_P$ is the value of the \textit{state} parameter of $c$ and $t_P: \mathbb{Z}\rarr \mathbb{Z}$ defined as:
	\begin{equation*}
		t_P(i) = \begin{cases}
			c.\mathit{head}.\mathit{symbol} &\text{if } i = 0\\
			c.\mathit{head}.\mathit{left}^{\text{-}i}.\mathit{symbol} &\text{if } i < 0\\
			c.\mathit{head}.\mathit{right}^i.\mathit{symbol} &\text{if } i > 0
		\end{cases},
	\end{equation*}
	where the dot notation $x.p$ indicates the value of parameter $p$ in $x$ and, for $i \geq 1$, $x.p^i$ is inductively defined as $x.p.p^{i-1}$ (with $x.p^0 = x$).
\end{definition}
Note that an implementation configuration is also a Turing machine configuration.
Next we define determinism for \Lname, as well as \emph{data races}.
\begin{definition}[Determinism]
	Let $s$ be a \textit{step} in an \Lname program.
	Then $s$ is deterministic iff for all states that can execute $s$, there exists exactly one state that is reached by executing $s$.
\end{definition}
\begin{definition}[Data Race]
	Let $s$ be a step of $\impl$. Let $P$ be an idle state. Then $s$ contains a data race starting in $P$ iff $P$ can execute $s$ (according to its schedule) and during this execution, there exist a parameter $v$ which is accessed by two distinct struct instances $a$ and $b$, with one of these accesses writing to $v$. We call a data race between writes a \emph{write-write} data race, and a data race between a read a \emph{read-write} data race.
\end{definition}
We use this to prove the following lemma:

\begin{lemma}[\Lname Determinism]\label{lem:det}
	An \Lname step $s$ is deterministic if it cannot be executed by an idle state $P$ in the execution of $\impl$ s.t. $s$ contains a data race starting in $P$.
\end{lemma}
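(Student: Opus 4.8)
The plan is to prove the implication at the level of a single successor: I fix an idle state $P$ reachable in the execution of $\impl$ that can execute $s$, and I show that $P$ has a unique successor under $s$ \emph{unless} $s$ exhibits a data race starting in $P$. Since the hypothesis rules out such a race for every executable $P$, determinism then follows directly from the definition. The starting observation is that, in \Lname's semantics, executing a step means that every struct instance of the relevant type runs the step's command list (built from \textbf{rd}, \textbf{wr}, \textbf{cons}, \textbf{push}, \textbf{pop}, and the like), and the inference rules permit these per-instance command lists to be interleaved in any order. Consequently, two distinct successors $P_1 \neq P_2$ can only arise from two interleavings of the same collection of low-level operations that disagree on the value produced by some operation, and it suffices to show such disagreement forces a race.

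Then I carry out a confluence (diamond) argument. I take two complete executions of $s$ from $P$, viewed as two linear orders on the same collection of per-instance operations, and I show that any two adjacent operations belonging to \emph{distinct} instances commute. The only way two operations of distinct instances $a$ and $b$ can fail to commute is when they touch a common parameter $v$ and at least one of them writes $v$: a \textbf{wr}/\textbf{wr} pair on $v$ makes the final value order-dependent, and a \textbf{rd}/\textbf{wr} pair makes the read value order-dependent. But these are precisely the write-write and read-write data races of the Data Race definition. Under the hypothesis there is no such race starting in $P$, so every pair of operations from distinct instances that share a parameter consists of two reads, which commute trivially. Operations internal to a single instance keep their relative order in every interleaving (the command list is sequential), so they never need to be swapped. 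Hence any interleaving can be transformed into any other by repeatedly exchanging adjacent commuting operations of distinct instances, and no such exchange changes the resulting state; therefore $P_1 = P_2$.

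The main obstacle I anticipate is the treatment of element creation via \textbf{cons}: when several instances allocate fresh instances, the resulting state could in principle differ with the interleaving, because freshly created instances receive identities, and a naive allocation scheme would make those identities depend on the global order of creation rather than being pinned by the race analysis above. To close this gap I would appeal to the concrete allocation rule of the referenced semantics, checking that the label of a created instance is determined locally by the creating instance (together with the static program point) rather than by a shared global counter, so that the set of created instances and their parameters is interleaving-independent. A second point to verify is that each \textbf{rd} within the step returns a consistent value: here I will use the fact that, for states reachable in $\impl$, there is at most one non-$\nil$ \textit{Control} instance executing, so the only cross-instance sharing that could occur is already covered by the race analysis, and, combined with the per-instance sequentiality of an instance reading its own writes, this guarantees that each read yields the same value in every interleaving. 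With allocation pinned down and this consistency property established, the diamond argument delivers uniqueness, and the lemma follows.
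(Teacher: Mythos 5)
Your proposal is correct in substance but argues along a genuinely different route than the paper. The paper's proof is a four-sentence contradiction: if $s$ is race-free but non-deterministic, some parameter $v$ ends up with multiple possible values; since the semantics forbids interleaving \emph{within} a single struct instance and contains no randomness, those multiple values can only come from accesses by distinct instances, which is by definition a data race --- contradiction. The key inference there (``the semantics do not allow randomness, which means that all non-determinism in \Lname results from data races'') is asserted by appeal to the semantics rather than proved. Your confluence/diamond argument is precisely the missing justification for that assertion: you show \emph{why} race-freedom forces a unique successor, by swapping adjacent operations of distinct instances and observing that only \textbf{wr}/\textbf{wr} or \textbf{rd}/\textbf{wr} pairs on a shared parameter fail to commute, which are exactly the races excluded by hypothesis. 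This is the standard DRF-guarantee proof shape, and it buys rigor where the paper buys brevity. You also surface two obligations the paper silently skips, and the first is a real one: in the cited semantics the \textbf{ComCons} rule only \emph{existentially} quantifies the fresh label (as the paper's own appendix restates in Lemma~\ref{lem: expressions}, case 5), so unless label choice is pinned down, determinism strictly holds only up to renaming of labels --- a caveat that affects the lemma as stated, not just your proof. Two points to tighten if you develop this fully: your opening claim that two executions consist of ``the same collection of low-level operations'' is not automatic, because \textbf{if}-branching makes an instance's command list depend on values it reads, so the collection-equality and read-consistency claims must be established together by induction on execution prefixes rather than assumed up front; and your appeal to the single non-$\nil$ \textit{Control} instance is unnecessary (and mildly circular, since later lemmas derive that fact using this one) --- the race hypothesis alone should carry that part of the argument.
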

\begin{proof}
	If $s$ contains no data races but is not deterministic, then some parameter $v$ can have multiple possible values after executing $s$ from some idle state $P$. As the operational semantics of \Lname do not allow interleaving by a single struct instance (as defined in the semantics of \Lname~\cite{franken-autonomous-2023}), $v$ must have been accessed by multiple struct instances during execution. The semantics also do not allow randomness, which means that all non-determinism in \Lname results from data races. These struct instances must then be in a data race. This is a contradiction.
\end{proof}
In practice, when a step is deterministic we can ignore interleaving of struct instances during the execution of the step.
\begin{lemma}
	\label{lem: ndinit}
	The execution of \textit{init} in $\impl$ is deterministic.
\end{lemma}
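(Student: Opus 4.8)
The plan is to apply Lemma~\ref{lem:det} directly: to show \textit{init} is deterministic, I would prove that \textit{init} contains no data race starting from any idle state $P$ in the execution of $\impl$ from which \textit{init} is run. By Lemma~\ref{lem:det}, the absence of data races in all such states immediately yields determinism, so the entire task reduces to a data-race analysis of the code in Listing~\ref{ex:init}.

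First I would pin down exactly which idle states can execute \textit{init}. According to the schedule in Listing~\ref{ex:sched}, \textit{init} is the first step run, so the only relevant idle state is the initial state of $\impl$. As noted in the text, the initial state of any \Lname program contains only the $\nil$-element of each struct. In particular there is exactly one instance of \textit{Control} that can execute \textit{init}, namely the $\nil$-element of \textit{Control}. This is the crux: since a data race requires two \emph{distinct} struct instances to access a common parameter with at least one write, and since \textit{init} is executed by a single instance, no write-write or read-write data race between distinct executing instances can arise.

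Next I would examine the parameter accesses inside \textit{init} more carefully to confirm there is no subtler race. The step creates three fresh \textit{TapeCell} instances (\texttt{cell0}, \texttt{cell1}, \texttt{cell2} in the example, and in general one per input symbol) via the constructor, then writes to their \textit{left} and \textit{right} parameters to link them, and finally creates a \textit{Control} instance. All of these writes target parameters of freshly created instances, and they are all performed by the single executing $\nil$-element of \textit{Control}; the newly created cells do not themselves execute any step during \textit{init}, so they cannot concurrently access these parameters. Hence every parameter touched during \textit{init} is accessed by only one struct instance, and no two distinct instances share an access.

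I expect the main obstacle to be the bookkeeping around creation during execution: I must argue that the \textit{TapeCell} and \textit{Control} instances created within \textit{init} do not themselves participate in any access to the parameters being written, i.e.\ that creation does not spawn concurrent step executions that could collide with the writes in Listing~\ref{ex:init}. This requires appealing to the \Lname semantics of \texttt{cons}/construction and to the fact that within a single scheduled execution of \textit{init} only the $\nil$-\textit{Control} is active. Once that is established, the single-active-instance observation closes the argument: no data race is possible, and Lemma~\ref{lem:det} gives determinism of \textit{init}.
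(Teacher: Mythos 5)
Your proposal is correct and follows essentially the same route as the paper: reduce determinism to the absence of data races via Lemma~\ref{lem:det}, then observe that \textit{init} is executed only by the $\nil$-instance of \textit{Control}, the sole instance present in the initial state, so no two distinct instances can race. Your additional care about the instances created \emph{during} \textit{init} (noting they are passive and execute no step) is a sound refinement of a point the paper's one-line argument glosses over, but it does not change the approach.
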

\begin{proof}
	To prove this we need to prove that the execution of \textit{init} contains no data races (Lemma~\ref{lem:det}).
	The step \emph{init} is only executed once, at the start of the program, by the $\nil$-instance of \textit{Control} (as no other instances exist). As only one instance exists, there cannot be a data race between two struct instances.
\end{proof}
\begin{restatable}[Executing \textit{init} in the initial state]{lemma}{initcontract}
	\label{contract: init}
	Let $P_0$ be the idle state at the start of executing $\impl$ and let the input string $S= s_0\ldots s_n$. Executing the step \textit{init} on $P_0$ results in a state $P_1$ with a single non-$\nil$ \textit{Control} instance such that $(q_0, t_S)$ is the implementation configuration of $P_1$.
\end{restatable}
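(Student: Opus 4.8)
The plan is to use determinism to reduce the statement to a single concrete execution trace and then read off the structure of the resulting state directly from the \Lname inference rules. By Lemma~\ref{lem: ndinit} executing \textit{init} on $P_0$ is deterministic, so there is a unique resulting idle state $P_1$; consequently I may analyse the execution as a single sequential run of the step body (Listing~\ref{ex:init}), without reasoning about interleavings of struct instances. The goal then splits into two parts: (i) establishing the \emph{shape} of $P_1$ --- that it contains exactly one non-$\nil$ \textit{Control} instance $c$ together with a doubly-linked chain of \textit{TapeCell} instances --- and (ii) checking that the implementation configuration $(q_{P_1}, t_{P_1})$ of $P_1$ (Definition~\ref{def: implconf}) coincides with $(q_0, t_S)$.

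For part (i) I would trace the body of \textit{init} according to the semantics. Since $P_0$ contains only $\nil$-instances, the step is run by the $\nil$-instance of \textit{Control}: the constructor calls $\textit{TapeCell}(\nil,\nil,s_i)$ create fresh instances $\mathit{cell}_0,\dots,\mathit{cell}_n$, the subsequent writes set $\mathit{cell}_i.\textit{left} = \mathit{cell}_{i-1}$ and $\mathit{cell}_{i-1}.\textit{right} = \mathit{cell}_i$ for $1 \le i \le n$, and the final constructor call $\textit{Control}(\mathit{cell}_0, 0, (q_0 \in F))$ creates the unique non-$\nil$ \textit{Control} instance $c$ with $c.\textit{head} = \mathit{cell}_0$ and $c.\textit{state} = 0 = q_0$. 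From these writes I would prove, by induction on $k$, that $\mathit{cell}_i.\textit{right}^{k} = \mathit{cell}_{i+k}$ and $\mathit{cell}_i.\textit{left}^{k} = \mathit{cell}_{i-k}$ whenever the indices stay in $\{0,\dots,n\}$, together with $\mathit{cell}_i.\textit{symbol} = s_i$; and that following \textit{left} from $\mathit{cell}_0$ or \textit{right} from $\mathit{cell}_n$ yields $\nil$, since those parameters are never written after the constructor.

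For part (ii) the state component is immediate: $q_{P_1} = c.\textit{state} = q_0$. For the tape component I would verify $t_{P_1}(i) = t_S(i)$ for every $i \in \mathbb{Z}$ by cases. For $0 \le i \le n$ the chain result from part (i) gives $t_{P_1}(i) = \mathit{cell}_i.\textit{symbol} = s_i = t_S(i)$. For $i < 0$ (resp.\ $i > n$), following \textit{left} (resp.\ \textit{right}) leaves the chain and reaches the $\nil$-instance of \textit{TapeCell}; as every parameter of a $\nil$-instance is fixed to its default value and $B = 0$, its \textit{symbol} equals $0 = B$, so $t_{P_1}(i) = B = t_S(i)$. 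Together these give $(q_{P_1}, t_{P_1}) = (q_0, t_S)$.

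The main obstacle is the careful semantic bookkeeping in part (i): one must argue from the \Lname inference rules exactly which instances the constructor calls produce and that the write operations establish the intended \textit{left}/\textit{right} links, and --- more delicately --- that the out-of-range accesses used implicitly by Definition~\ref{def: implconf} resolve through $\nil$ to the $\nil$-instance with default \textit{symbol} $0$. Making the boundary cases ($i = 0$, $i = n$, and the first step into $\nil$) line up with the piecewise definition of $t_S$ is where the detailed reasoning is required; the determinism granted by Lemma~\ref{lem: ndinit} is what keeps this a single tractable trace rather than a case analysis over interleavings.
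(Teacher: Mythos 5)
Your proposal is correct and follows essentially the same route as the paper's own proof: invoke Lemma~\ref{lem: ndinit} to reduce to a single deterministic, sequential trace of Listing~\ref{ex:init}, establish that the trace yields exactly one non-$\nil$ \textit{Control} instance with \textit{state} $= q_0$ and a correctly linked \textit{TapeCell} chain with \textit{head} pointing to the cell for $s_0$, and conclude that the implementation configuration of $P_1$ is $(q_0, t_S)$. If anything, your explicit induction along the chain and your treatment of the out-of-range indices $i < 0$ and $i > n$ (resolving through $\nil$ to the $\nil$-instance whose \textit{symbol} is the default value $0 = B$) is more careful than the paper's proof, which cites its auxiliary lemmas for the constructor and update effects but leaves those boundary cases implicit.
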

\begin{proof}
	The proof consists of sequentially walking through the statements of \emph{init} when executed from the initial state (which is idle) of $\impl$ as defined in the semantics of \Lname, processing the statements using those semantics.
\end{proof}

\begin{lemma}
	\label{lem: ndtransition}
	Let $P$ be an idle state reachable in $\impl$ with a single non-$\nil$ \textit{Control} instance.
	Any execution of \textit{transition} executed from $P$ is deterministic.
\end{lemma}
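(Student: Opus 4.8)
The plan is to prove Lemma~\ref{lem: ndtransition} by appealing to Lemma~\ref{lem:det}: it suffices to show that, from any reachable idle state $P$ with a single non-$\nil$ \textit{Control} instance $c$, the execution of \textit{transition} contains no data race. First I would observe the key structural fact that makes this tractable: by assumption $P$ contains exactly one non-$\nil$ \textit{Control} instance. The $\nil$-instance of \textit{Control} cannot write to its own parameters (as stated when \textit{init} was discussed), and the \textit{transition} step, being a step of the \textit{Control} struct, is executed in parallel by every \textit{Control} instance. So effectively only the single instance $c$ performs any writes during the step, and the \textit{TapeCell} instances play no active role (they do not execute \textit{transition}). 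This immediately rules out write-write races between two distinct struct instances executing the step.

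\medskip
\noindent
Next I would examine exactly which parameters \textit{transition} reads and writes, working from Listings~\ref{ex:clause} and \ref{ex:transition}. A single clause reads \textit{state} and \textit{head.symbol} in its guard, and on firing writes \textit{head.symbol}, \textit{state}, \textit{accepting}, possibly \textit{head.right} (the newly created cell link), and \textit{head}. The if-else-if structure guarantees at most one clause fires, so within $c$ the accesses are sequenced by the semantics and there is no intra-instance concurrency to worry about (the semantics forbid interleaving within a single instance, as used in Lemma~\ref{lem:det}). The parameters \textit{state}, \textit{accepting}, and \textit{head} are parameters of $c$ itself; since $c$ is the unique writer, no other instance touches them. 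The parameters \textit{head.symbol} and \textit{head.right} belong to the \textit{TapeCell} pointed to by $c.\mathit{head}$; again $c$ is the only instance writing or reading them during this step, because the \textit{TapeCell} instances do not run \textit{transition}. Hence no parameter is accessed by two distinct instances with one access being a write, which is precisely the negation of a data race.

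\medskip
\noindent
The step I expect to be the main obstacle is handling the constructor call \textit{TapeCell}(\textit{head}, null, 0) on line~6 of Listing~\ref{ex:clause}: creating a new struct instance and assigning it to \textit{head.right} must be shown not to introduce a race, for instance by ensuring the guard \textit{head} != null \&\& \textit{head.right} == null is evaluated and the new cell linked entirely within $c$'s execution, with no other instance concurrently reading or writing \textit{head.right}. I would argue that because $c$ is the sole non-$\nil$ \textit{Control} instance actually executing writes, the creation and the subsequent assignment are serialized within $c$ and invisible to any other active writer. One subtlety worth a remark is the $\nil$-instance of \textit{Control}, which also nominally executes \textit{transition}; I would note that its guards can only match via its fixed $\nil$-valued parameters and that it cannot write to its own parameters anyway, so it contributes no conflicting write. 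Assembling these observations shows \textit{transition} has no data race starting in $P$, and Lemma~\ref{lem:det} then yields determinism.
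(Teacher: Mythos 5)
Your proposal is correct and takes essentially the same route as the paper: reduce determinism to absence of data races via Lemma~\ref{lem:det}, observe that only the two \textit{Control} instances (the single non-$\nil$ instance and the $\nil$-instance) execute \textit{transition}, and argue that the $\nil$-instance, whose parameters are fixed to $\nil$-values, cannot produce a conflicting access on any parameter the non-$\nil$ instance touches. One small imprecision worth fixing: the writes the $\nil$-instance attempts through \textit{head} (to \textit{head.symbol} and \textit{head.right}) are not writes to its \emph{own} parameters but to parameters of the $\nil$ \textit{TapeCell}, so the precise justification is the one the paper uses — parameters of $\nil$-instances of \emph{any} struct cannot be written to in \Lname — combined with the fact that the non-$\nil$ instance's \textit{head} never points to that $\nil$ \textit{TapeCell}.
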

\begin{proof}
	As per Lemma~\ref{lem:det}, we prove that the execution contains no data races.
	Let $c$ be an arbitrary clause in the \textit{transition} step (Listing~\ref{ex:clause}).
	If \textit{transition} has a data race during the execution of $c$, this data race must occur between the one non-$\nil$ instance and the $\nil$-instance of \textit{Control}. Let the non-$\nil$ instance be $x_0$ and let $x_1$ be the $\nil$-instance of \textit{Control}.
	Then the parameter which is accessed must be shared by both. This can only be \textit{head.symbol}, as $x_0$ will not get through the if-statement and the other parameters are relative to $x_0$ and $x_1$. However, as $x_0.\mathit{head} = \nil$, this means $\mathit{head.symbol}$ cannot be written to, as parameters of $\nil$-instances cannot be written to in \Lname. This contradicts that it can be in a data race.
\end{proof}
\begin{restatable}{lemma}{impldeterministic}
	\label{lem: dettrans}
	Every \textit{transition} step executed in $\impl$ is deterministic.	
\end{restatable}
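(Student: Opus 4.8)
The plan is to reduce this statement to Lemma~\ref{lem: ndtransition}, which already yields determinism of \textit{transition} under the extra hypothesis that the idle state from which it is executed contains a single non-$\nil$ \textit{Control} instance. The remaining task is therefore to show that \emph{every} idle state of $\impl$ from which \textit{transition} is actually executed satisfies this hypothesis. I would argue this by following the schedule \textit{init}~$<$~\textsf{Fix}(\textit{transition}): the step \textit{transition} is only ever scheduled after the single execution of \textit{init}, and it is then iterated under the \textsf{Fix} operator until the program stabilises.

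First I would establish the invariant that the number of non-$\nil$ \textit{Control} instances is preserved by \textit{transition}. Inspecting the clauses of \textit{transition} (Listing~\ref{ex:clause}), the only constructor calls are of the form \textit{TapeCell}$(\ldots)$; no clause ever invokes the \textit{Control} constructor. Hence executing \textit{transition} creates at most new \textit{TapeCell} instances and leaves the set of \textit{Control} instances unchanged, so a state with exactly one non-$\nil$ \textit{Control} instance is always mapped to another such state.

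Next I would combine this invariant with Lemmas~\ref{contract: init} and~\ref{lem: ndtransition} in an induction on the number of \textit{transition} steps taken since \textit{init}. For the base case, Lemma~\ref{contract: init} guarantees that the idle state reached immediately after \textit{init} has a single non-$\nil$ \textit{Control} instance, so Lemma~\ref{lem: ndtransition} makes the first \textit{transition} deterministic. For the inductive step, determinism of the $k$-th \textit{transition} yields a unique successor idle state, which by the invariant still has a single non-$\nil$ \textit{Control} instance and is reachable in $\impl$; Lemma~\ref{lem: ndtransition} then applies to the $(k{+}1)$-th execution. Since the schedule never executes \textit{transition} from any state outside this chain, every \textit{transition} step in $\impl$ is deterministic.

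The main obstacle I anticipate is not the counting argument itself but justifying, against the \Lname semantics, that the states interleaving the \textsf{Fix} iterations are genuinely \emph{idle} (so that Definition~\ref{def: implconf} and Lemma~\ref{lem: ndtransition} apply verbatim) and that no \textit{Control} instance can arise indirectly, for instance through a $\nil$-instance side effect, during a \textit{transition} execution. I would close this by appealing to the fact already used in Lemma~\ref{lem: ndtransition}, namely that the $\nil$-instance of \textit{Control} cannot write to its own parameters, together with the fact that the \textsf{Fix} operator returns control to an idle state after each round, so that both the invariant and idleness propagate cleanly through the iteration.
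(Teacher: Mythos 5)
Your proposal is correct and follows essentially the same route as the paper's own proof: an induction over the sequence of \textit{transition} executions, with Lemma~\ref{contract: init} providing the base case (a single non-$\nil$ \textit{Control} instance after \textit{init}), Lemma~\ref{lem: ndtransition} providing determinism of each execution, and the observation that no clause of \textit{transition} calls the \textit{Control} constructor propagating the single-instance invariant. Your additional care about idleness of the intermediate states under \textsf{Fix} and about indirect instance creation is a reasonable elaboration of points the paper leaves implicit, but it does not change the argument.
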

\begin{proof}
	By induction. As a base case, the first execution of \textit{transition} happens from $P_1$ as defined in Lemma~\ref{contract: init}, which has only one non-$\nil$ \textit{Control} instance. 
	
	Then consider the execution of \textit{transition} from an idle state $P'$ with one non-$\nil$ \textit{Control} instance, resulting in idle state $P$. Due to Lemma~\ref{lem: ndtransition}, we know that the execution of \textit{transition} is deterministic, so we can consider the sequential execution of \textit{transition}. As \textit{transition} is made up of multiple mutually exclusive clauses, considering only a single clause suffices. 
	As in none of the statements a \textit{Control} instance is created, as seen in Listing~\ref{ex:clause}, it follows that $P$ will also have only a single non-$\nil$ \textit{Control} instance.
\end{proof}
\begin{restatable}[Effect of a \textit{transition} execution]{lemma}{transitioncontract}
	\label{contract: transition}
	Let $P$ be an idle state of $\impl$ from which \textit{transition} can be executed and let $(q,t)$ be the implementation configuration of $P$.
	Assume that $(q, t)$ is also a configuration of $T$.
	Then the result of a transition in $T$ is a configuration $(q', t')$ iff the result of executing the \textit{transition} step from $P$ in $\impl$ is an idle state $P'$ such that $(q',t')$ is its implementation configuration.
\end{restatable}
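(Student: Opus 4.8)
The plan is to reduce the biconditional to a single equality of configurations, exploiting determinism on both sides. By Lemma~\ref{lem: dettrans}, executing \textit{transition} from $P$ yields a unique idle state $P'$ (idle because, once the step completes, no instance is mid-execution), and by the argument of that same lemma $P'$ again contains a single non-$\nil$ \textit{Control} instance $c'$, so its implementation configuration $(q_{P'}, t_{P'})$ is well defined via Definition~\ref{def: implconf}. Since $T$ is deterministic, the transition from $(q,t)$, when $\delta(q,t(0))$ is defined, yields a unique $(q',t')$. It therefore suffices to prove the single statement $(q_{P'}, t_{P'}) = (q',t')$: once this is established, both directions of the iff are immediate, since ``$(q',t')$ is the result of a transition in $T$'' and ``$(q',t')$ is the implementation configuration of $P'$'' each pin down the same unique pair.

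To compute $(q_{P'}, t_{P'})$, I would first identify the clause that fires. Writing $c$ for the non-$\nil$ \textit{Control} of $P$, Definition~\ref{def: implconf} gives $c.\mathit{state} = q$ and $c.\mathit{head}.\mathit{symbol} = t(0)$. Because the clauses of \textit{transition} are mutually exclusive and range exactly over the pairs on which $\delta$ is defined (Listing~\ref{ex:transition}), exactly the clause for $\delta(q,t(0)) = (q',s',D)$ is taken, and none if $\delta(q,t(0))$ is undefined (in which case neither side of the iff produces a successor and the claim holds vacuously). Determinism (Lemma~\ref{lem: ndtransition}) lets me reason about the statements of this clause sequentially, ignoring interleaving. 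For $D = R$ (Listing~\ref{ex:clause}) I would walk through its effects in order: $c.\mathit{head}.\mathit{symbol}$ becomes $s'$; $c.\mathit{state}$ becomes $q'$, giving $q_{P'} = q'$; $c.\mathit{accepting}$ becomes $(q'\in F)$; a fresh \textit{TapeCell} with \textit{left} $= c.\mathit{head}$, \textit{right} $= \nil$, \textit{symbol} $= 0$ is spliced in precisely when $c.\mathit{head}.\mathit{right} = \nil$; and finally $c.\mathit{head}$ is advanced to $c.\mathit{head}.\mathit{right}$.

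The crux is verifying $t_{P'} = t'$ against Definition~\ref{def: conf}, and the work is the index bookkeeping induced by advancing the head. Writing $h$ for the old head and $h' = h.\mathit{right}$ for the new one, the chain accessors reindex as $h'.\mathit{left}^{k} = h.\mathit{left}^{k-1}$ and $h'.\mathit{right}^{k} = h.\mathit{right}^{k+1}$ for $k \geq 1$, so for $i \neq -1$ one gets $t_{P'}(i) = t(i+1)$ and for $i = -1$ one gets $t_{P'}(-1) = h.\mathit{symbol} = s'$, matching the $D=R$ branch of Definition~\ref{def: conf}. I expect this to be the main obstacle, for two reasons. First, off-tape consistency: Definition~\ref{def: implconf} reads symbols by following \textit{right}/\textit{left} pointers, which eventually reach the $\nil$-instance of \textit{TapeCell}, so I must use that $\nil.\mathit{right} = \nil.\mathit{left} = \nil$ and $\nil.\mathit{symbol} = 0 = B$ to ensure unvisited positions read as $B$ on both the implementation and the Turing-machine side. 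Second, the boundary case: if the head was at the rightmost existing cell ($h.\mathit{right} = \nil$), advancing without creating a cell would make the new head $\nil$ and destroy the link recording $s'$ at position $-1$; the guarded constructor with \textit{symbol} $= 0$ is exactly what preserves $t_{P'}(-1) = s'$ while giving $t_{P'}(0) = 0 = B = t(1)$, consistent with position $1$ having been off-tape. I would also check that the non-$\nil$ head guard $c.\mathit{head} \neq \nil$ is what prevents the $\nil$-instance from attempting to create cells.

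The case $D = L$ is entirely symmetric, swapping the roles of \textit{left} and \textit{right} and landing in the $D=L$ branch of Definition~\ref{def: conf}. Combining the two cases yields $(q_{P'}, t_{P'}) = (q',t')$, which together with the determinism reduction of the first paragraph establishes the biconditional.
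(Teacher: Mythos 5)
Your proposal is correct and follows essentially the same route as the paper's own proof: exploit determinism (Lemmas~\ref{lem: ndtransition} and \ref{lem: dettrans}) to walk sequentially through the statements of the unique firing clause, then verify that the resulting pointer structure realises exactly the shifted tape $t'$ of Definition~\ref{def: conf}. Your write-up is in fact somewhat more careful than the paper's: you make the reindexing $h'.\mathit{left}^{k} = h.\mathit{left}^{k-1}$, $h'.\mathit{right}^{k} = h.\mathit{right}^{k+1}$ explicit, handle the off-tape ($\nil$-instance) and rightmost-cell boundary cases, and explicitly reduce the biconditional to a single equality of uniquely determined configurations, all of which the paper leaves implicit.
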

\begin{proof}
	We know from Lemma~\ref{lem: dettrans} that $p$ has one non-$\nil$ \textit{Control} instance. The proof consists of walking through the statements of \textit{transition} starting at $p$.
\end{proof}	
By induction, using Lemma~\ref{contract: init} as base case and Lemma~\ref{contract: transition} as step, any idle state of $\impl$ after executing \textit{init} corresponds directly to a state $(q, t)$ of $T$, including terminating and accepting states. We conclude:
\begin{restatable}{theorem}{Turingcomplete}
\label{thm: tc}
	\Lname is Turing complete.
\end{restatable}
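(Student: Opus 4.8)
The plan is to exhibit, for every Turing machine $T$ and input string $S$, the concrete program $\impl$ of Section~\ref{sec: TMADL} and to show that its execution faithfully simulates $T$ on $S$. Since $T$ and $S$ are arbitrary and the construction of $\impl$ from $(T,S)$ is uniform, establishing a step-by-step correspondence between the run of $\impl$ and the run of $T$ suffices to conclude Turing completeness.

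First I would fix the schedule $\textit{init} < \mathrm{Fix}(\textit{transition})$ and read off the structure of the computation it induces: $\impl$ starts in the initial idle state $P_0$, executes \textit{init} once to reach an idle state $P_1$, and then repeatedly executes \textit{transition} from idle state to idle state until a fixpoint (no change to the state) is reached. Using Lemma~\ref{lem: ndinit} and Lemma~\ref{lem: dettrans}, every such execution is deterministic and preserves the invariant of a single non-$\nil$ \textit{Control} instance, so each intermediate idle state has a well-defined implementation configuration in the sense of Definition~\ref{def: implconf}.

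The core of the argument is an induction on the number $k$ of \textit{transition} steps taken, proving that the implementation configuration reached after $k$ steps equals the Turing machine configuration $(q_k, t_k)$ reached by $T$ after $k$ transitions from $(q_0, t_S)$. The base case $k = 0$ is exactly Lemma~\ref{contract: init}, which gives that the implementation configuration of $P_1$ is $(q_0, t_S)$. For the inductive step I would apply Lemma~\ref{contract: transition}: given that the idle state after $k$ steps has implementation configuration $(q_k, t_k)$, which is also a configuration of $T$ by the induction hypothesis, its \emph{iff} establishes both that a transition of $T$ to $(q_{k+1}, t_{k+1})$ forces $\impl$ to an idle state with that same implementation configuration, and conversely. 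This yields a lockstep correspondence between the two runs as long as transitions remain applicable.

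The step I expect to be the main obstacle is reconciling $\impl$'s termination criterion (the $\mathrm{Fix}$ fixpoint, detected at the level of the \Lname \emph{state}) with $T$'s halting criterion (no applicable transition). I would argue that a \textit{transition} execution leaves the state unchanged exactly when no clause fires, which by construction of the clauses happens exactly when $\delta(q_k, t_k(0))$ is undefined, i.e. exactly when $T$ halts (in particular whenever $q_k \in F$, since $\delta$ is defined only on $Q \setminus F$). Some care is needed here: even when $T$ loops forever through configurations that happen to coincide as abstract configurations (e.g. a machine moving right over an all-blank tape), the \Lname state keeps changing because a fresh \textit{TapeCell} is created, so $\mathrm{Fix}$ correctly does \emph{not} terminate; conversely, when $T$ halts no cell is created and the state is genuinely stable. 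Hence $\impl$ stabilizes iff $T$ halts, and when it does the \textit{accepting} parameter of the \textit{Control} instance records whether $q \in F$, so $\impl$ accepts $S$ iff $T$ does. As $T$ and $S$ were arbitrary, \Lname can simulate every Turing machine and is therefore Turing complete.
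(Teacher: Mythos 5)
Your proposal takes essentially the same approach as the paper: the paper derives Theorem~\ref{thm: tc} by exactly the induction you describe, using Lemma~\ref{contract: init} as the base case and Lemma~\ref{contract: transition} as the inductive step to match idle states of $\impl$ with configurations of $T$. Your explicit reconciliation of the $\mathrm{Fix}$ stabilization criterion with $T$'s halting (no clause fires iff $\delta$ is undefined, and a firing clause always changes the state, e.g.\ by moving \textit{head} or creating a \textit{TapeCell}) is in fact more careful than the paper, which compresses this into the remark that the correspondence includes ``terminating and accepting states.''
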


\section{Conclusion}\label{sec: Con}
In this paper, we have proven \Lname Turing complete by implementing a sequential Turing machine.
In future work, we hope to extend the principles here to a full system to prove \Lname programs correct.
We may also extend the proofs to the weak memory model variant of the \Lname semantics~\cite{leemrijse2023}.
\bibliographystyle{splncs04}
\bibliography{bibliography}

\appendix
\section{Proofs}\label{appendix}
In this appendix, we first present some auxiliary lemmas in Section~\ref{sec:aux}, after which we present the in detail proofs to support Lemmas~\ref{contract: init} and \ref{contract: transition} in Section~\ref{sec:proof}.

The auxiliary lemmas in Section~\ref{sec:aux} use the \Lname semantics to prove that update statements actually perform updates, constructor statements actually construct new data elements and so forth. We recommend reading it only to those familiar with the semantics of \Lname. 
Section~\ref{sec:proof} does not use any concepts other than those introduced in the paper and references to lemmas from Section~\ref{sec:aux} and should not prove a challenge for those who read this paper.
\subsection{Auxiliary \Lname lemmas}\label{sec:aux}
The lemmas presented here are generally useful for any program to be proven correct. They make use of the \Lname semantics, so familiarity with these semantics~\cite{franken-autonomous-2023} is assumed. 

For the rest of this section, let $\Program$ be an \Lname program without read-write data races. Let $P = \langle \Sched, \Structs, \Stab\rangle$ be a state of $\Program$, and let $\ell_p \in\Labels$ be a label and $p$ a struct instance s.t. $p = \Structs(\ell_p) = \langle \StructType_p, \interp{E};\ComList_p, \Stack_p, \Env_p\rangle$. In the first lemma, we prove the result of resolving references in \Lname:

\begin{lemma}[\Lname reference resolution]\label{lem: refres}
	Let $E$ be a variable expression of the form ``\emph{\texttt{$A$.$a$}}'' in \Lname, where $A$ has the syntax ``\emph{\texttt{$a_1$.$\cdots$.$a_n$}}'' with variable identifiers $a, a_1, a_2, a_3, \ldots, a_n$.
	Let the state $P' = \langle \Sched, \Structs', \Stab'\rangle$ be a state resulting from the last transition to resolve $A$, s.t. $\Structs(\ell_p) = \langle \StructType_p, \ComList'_p; \Stack'_p, \Env_p\rangle$. 	
	Then $\Stack'_p = \Stack_p;\ell$, where
	\[\ell = \begin{cases}
		\ell_p &\text{if $n = 0$}\\
		p.a_1.\cdots.a_n &\text{otherwise},
	\end{cases}\]
\end{lemma}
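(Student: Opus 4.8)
The plan is to induct on the length $n$ of the reference chain $A = a_1.\cdots.a_n$, using the \Lname inference rules that govern how a variable expression is compiled into a command list and then evaluated against the stack $\Stack_p$. The conclusion is purely about the top of the stack, so the whole argument tracks only how $\Stack_p$ grows as the prefix of commands corresponding to $A$ is consumed.

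First I would unfold the compilation $\interp{E}$ for $E$ of the form $A.a$. In the \Lname semantics an access chain is translated into an initial command that pushes the label of the current instance $\this$ (namely $\ell_p$), followed by one read command per parameter in the chain, i.e.\ $\readg{a_1},\ldots,\readg{a_n}$ and finally $\readg{a}$. Resolving $A$ corresponds to executing exactly the prefix consisting of the $\this$-push together with $\readg{a_1},\ldots,\readg{a_n}$; the ``last transition to resolve $A$'' is the step that consumes $\readg{a_n}$, or the $\this$-push itself when $n = 0$. Throughout this prefix neither $\Sched$, $\Stab$, nor the local environment $\Env_p$ of $p$ is modified, so the entire effect of resolution is confined to the stack of $p$, which justifies stating the result as a concatenation $\Stack_p;\ell$.

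For the base case $n = 0$, the only command executed is the push of the current instance, whose inference rule appends $\ell_p$ to the stack, giving $\Stack'_p = \Stack_p;\ell_p$ and matching the first branch of the claimed $\ell$. For the inductive step I would assume that after resolving $a_1.\cdots.a_{n-1}$ the stack is $\Stack_p;\ell'$ with $\ell' = p.a_1.\cdots.a_{n-1}$ (taking $\ell' = \ell_p$ when $n = 1$, consistent with $x.p^0 = x$). The final transition evaluates $\readg{a_n}$: by the read rule this pops $\ell'$ and pushes the label stored in parameter $a_n$ of the instance $\Structs(\ell')$, i.e.\ the label $\ell'.a_n = p.a_1.\cdots.a_n$. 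Hence $\Stack'_p = \Stack_p;\,p.a_1.\cdots.a_n$, which is the second branch.

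The main obstacle I anticipate is not the induction itself but the bookkeeping of matching each evaluation step to the precise inference rule, and in particular justifying that the read of $a_n$ returns a well-defined single label. This is where the standing assumption that $\Program$ contains no read-write data races is essential: it guarantees that the value read from $a_n$ during the resolution of $A$ is uniquely determined, so that the resolved label $p.a_1.\cdots.a_n$ is unambiguous. I would also confirm that the read rule leaves the portion $\Stack_p$ of the stack below the popped element untouched, so that the claimed concatenation $\Stack_p;\ell$ holds verbatim rather than merely up to the top element.
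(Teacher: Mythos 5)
Your proposal is correct and follows essentially the same route as the paper's own proof: induction on the length of the access chain, with the \textbf{ComPushThis} rule handling the base case, the \textbf{ComRd} rule handling the inductive step, and the absence of read-write data races guaranteeing that each read yields a uniquely determined label. The only cosmetic difference is that the paper additionally cites well-typedness to justify that each intermediate value is indeed a label in $\Labels$, a point you assume implicitly.
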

\begin{proof}
	First, $E$ is either a variable that has to be read from, or a variable that has to be written to. In both cases, $\interp{A} = \push{\this};\readg{a_1};\ldots;\readg{a_n}$. We prove by induction on $n$ that for any $n$, the value on the stack resulting from the resolution of $A$ is a single label. As induction hypothesis we take that after the reads up till and including $\readg{a_{j}}$, with $\Stack_{p, j}$ being the stack after those reads, $\Stack_{p, j} = \Stack_p;\ell_{j}$, where $\ell_j$ is the only possible label that can be read from the reads so far.
	\begin{itemize}
		\item[$n = 0$:] If $n = 0$, $\interp{A} = \push{\this}$, which is resolved using the derivation rule \textbf{ComPushThis} and results in $\Stack'_p = \Stack_p;\ell_p$. Therefore, the base case holds.
		\item[$n = i$:] By the induction hypothesis, we know that after the $i-1$th read, $\Stack_{p, i-1} = \Stack_p;p.a_1.\cdots.a_{i-1}$. Then due to the form of $E$ and by the well-typedness of the syntax of $P$, we know that $p.a_1.\cdots.a_{i-1}.a_i\in\Labels$. As there are no read-write data races, we know that there is only a single possible label $p.a_1.\cdots.a_{i-1}.a_i$. Then it follows by the transition \textbf{ComRd} that the read of $p.a_1.\cdots.a_{i-1}.a_i$ removes $p.a_1.\cdots.a_{i-1}$ from the stack and adds $p.a_1.\cdots.a_i$ to it. Therefore, $\Stack_{p, i} = \Stack_p;p.a_1.\cdots.a_i$ and the step holds.
	\end{itemize}
	As the induction holds, the lemma holds.
\end{proof}

We then prove the effect of executing an expression:
\begin{lemma}[\Lname expression execution]\label{lem: expressions}
	Let $E$ be an \Lname expression. Let the state $P' = \langle \Sched, \Structs', \Stab'\rangle$ be a state resulting from the transition of the last command of $\interp{E}$, with $p' = \Structs'(\ell_p) = \langle \StructType_p, \ComList'_p; \Stack'_p, \Env'_p\rangle$. 	
	Then there exists a value $v$ s.t. $\Stack'_p = \Stack_p;v$. Moreover:
	\begin{enumerate}
		\item If $E = $ ``\emph{\texttt{\textbf{this}}}'', $v = \ell_p$.
		\item If $E = $ ``\emph{\texttt{\textbf{null}}}'', and $T$ is the type as determined by the context of $E$, $v = \defaultVal(T)$
		\item If $E = \mathit{lit}$, where $\mathit{lit}\in\Literals$, with semantic value $\val{\mathit{lit}}$, then $v = \val{\mathit{lit}}$.
		\item If $E = x_1.\cdots.x_n.x$, with $x_1,\ldots,x_n\in \Id$, then $v = x_1.\cdots.x_n.x$.
		\item If $E = $ ``\emph{\texttt{$\StructType$ ( $E_1, \ldots, E_n$ )}}'', for a struct type $\StructType$ with parameters\\ $\mathit{par}_1,\ldots, \mathit{par}_n$, then $v\in \Labels$ s.t. $\Structs(v) = \bot$ and $\Structs'(v) = \langle \StructType, \vempty, \vempty, \Env\rangle$. Moreover, with $v_1, \ldots v_n$ as the values resulting from resolving $\interp{E_1},\ldots \interp{E_n}$, $\Env = \Env_{\StructType}^0[\mathit{par}_1 \mapsto v_1, \ldots, \mathit{par}_n\mapsto v_n]$.
		\item If $E = $ ``\emph{\texttt{! $E'$}}'', then $v = \neg v'$, with $v'$ as the value resulting from resolving $\interp{E}$.
		\item If $E = $ ``\emph{\texttt{( $E'$ )}}'', then $v = v'$, with $v'$ as the value resulting from resolving $\interp{E}$.
		\item If $E = $ ``\emph{\texttt{$E_1$ o $E_2$}}'', for syntactic operator $o$ with semantic equivalent $\circ$, then $v = v_1 \circ v_2$, with $v_1$ as the value resulting from resolving $\interp{E_1}$ and with $v_2$ as the value resulting from resolving $\interp{E_2}$
		\item If $E$ creates a new struct instance, $\Stab' = \false^{|\Stab|}$.
	\end{enumerate}
\end{lemma}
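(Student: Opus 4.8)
The plan is to argue by structural induction on the shape of the expression $E$, in each case unfolding the compilation $\interp{E}$ into its list of stack commands and tracing which operational-semantics rules fire as those commands execute starting from the configuration with stack $\Stack_p$. The three atomic non-variable forms are immediate base cases. For $E = \this$ the compilation is $\interp{\this} = \push{\this}$, and the push rule yields $\Stack'_p = \Stack_p;\ell_p$, giving case~1; the analogous push rules for the null literal and for a literal $\mathit{lit}$ push $\defaultVal(T)$ and $\val{\mathit{lit}}$ respectively, giving cases~2 and~3. For the variable chain $E = x_1.\cdots.x_n.x$ I would appeal to Lemma~\ref{lem: refres} to resolve the address prefix $x_1.\cdots.x_n$ to a single label on the stack, after which the concluding read $\readg{x}$ replaces that label with the value of parameter $x$, settling case~4. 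Note that the global no-read-write-data-race assumption on $\Program$ is precisely what makes this (and every read occurring during evaluation) single-valued, so I would flag that it is used implicitly throughout.

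Before the compound cases I would isolate the compositionality fact that the statement tacitly relies on: evaluating a subexpression $E'$ from a configuration whose stack is $\Stack_p$ disturbs only the portion of the stack above $\Stack_p$, so by the induction hypothesis it terminates with stack $\Stack_p;v'$. Granting this, the inductive cases are short reads of a single rule. For $E = (E')$ the compilation is exactly $\interp{E'}$, so case~7 is immediate. For $E = !E'$ the compilation is $\interp{E'};\Notc$; the hypothesis leaves $v'$ on top, and the rule for $\Notc$ pops it and pushes $\neg v'$, giving case~6. For $E = E_1\,o\,E_2$ the compilation is $\interp{E_1};\interp{E_2};\Operator$; applying the hypothesis twice (with base stacks $\Stack_p$ and $\Stack_p;v_1$) produces $\Stack_p;v_1;v_2$, after which the rule for $\Operator$ pops both operands and pushes $v_1\circ v_2$, giving case~8.

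The constructor $E = \StructType(E_1,\ldots,E_n)$ is the main obstacle, and it carries cases~5 and~9 at once. After evaluating the arguments left to right to reach stack $\Stack_p;v_1;\cdots;v_n$ by repeated use of the hypothesis, the command $\cons{\StructType}$ fires, and from its rule I would have to extract three facts: that a label $v$ fresh for the current heap is chosen, so $\Structs(v) = \bot$; that the new instance is installed as $\Structs'(v) = \langle \StructType, \vempty, \vempty, \Env\rangle$ with $\Env = \Env_{\StructType}^0[\mathit{par}_1\mapsto v_1,\ldots,\mathit{par}_n\mapsto v_n]$; and that $v$ alone is left on top of the stack. The stability reset $\Stab' = \false^{|\Stab|}$ of case~9 is the side effect recorded by this same rule, so case~9 follows from the same analysis, while for any $E$ without a constructor none of the fired rules touch $\Stab$ and its hypothesis is vacuous. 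I expect the genuinely delicate points to be exactly this compositionality claim (justifying that subexpression evaluation preserves the lower stack frame so the hypotheses may be chained) together with the freshness and stability bookkeeping of the constructor rule; every other case reduces to reading off a single derivation.
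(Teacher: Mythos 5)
Your proposal matches the paper's proof in both structure and substance: a structural induction with cases 1--4 as base cases and 5--8 as inductive cases, using Lemma~\ref{lem: refres} plus a final read for variable chains, the corresponding derivation rule (push, not, op, cons) for each compound form, and deriving case~9 either from the constructor rule's stability reset or from the induction hypothesis on a subexpression. Your explicit isolation of the stack-frame compositionality fact is a point the paper leaves implicit in its induction hypothesis, but it is a refinement of the same argument, not a different one.
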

\begin{proof}
	We prove the lemma by implicit structural induction, with our induction hypothesis being that the lemma holds for any subexpression encountered in cases 5-8, and with as base cases the cases 1-4. Due to our assumption that $P$ has no read-write data races, we can assume that there is only one possible value read from any variable.
	We then prove the cases separately:
	\begin{enumerate}
		\item If $E = $ ``\texttt{\textbf{this}}'', then according to the interpretation function, this is interpreted as the command $\push{\this}$, which is then resolved using the derivation rule \textbf{ComPushThis}, executed by $p$. The result of this derivation rule is that $p$ pushes $\ell_p$ on it's stack, so this base case holds and $v = \ell_p$.
		\item If $E = $ ``\texttt{\textbf{null}}'', and $T$ is the type as determined by the context of $E$, then according to the interpretation function, this is interpreted as the command $\push(\defaultVal(T))$. This is then resolved using the derivation rule \textbf{ComPush} executed by $p$, which pushes $\defaultVal(T)$ on the stack. Therefore, this base case holds and $v = \defaultVal(T)$.
		\item If $E = \mathit{lit}$, where $\mathit{lit}\in\Literals$, with semantic value $\val{\mathit{lit}}$, then according to the interpretation function, this is interpreted as the command $\push{\val{\mathit{lit}}}$. This is resolved using the derivation rule \textbf{ComPush} executed by $p$, which pushes $\val{\mathit{lit}}$ to the stack. Therefore, this base case holds and $v = \val{\mathit{lit}}$.
		\item If $E = x_1.\cdots.x_n.x$, with $x_1,\ldots,x_n\in V$, then according to the interpretation function, this is interpreted as the commands \\$\push{\this};\readg{x_1};\ldots;\readg{x_n};\readg{x}$. Then, let $P_1$ be the state after resolving $\push{\this};\readg{x_1};\ldots;\readg{x_n}$ with struct environment $\Structs_1$ s.t. $\Structs_1(\ell_p)$ has stack $\Stack_1$. From Lemma~\ref{lem: refres} we know that $\Stack_1 = \Stack;\ell_p.x_1.\cdots.x_n$ and that $\ell_p.x_1.\cdots.x_n \in\Labels$. Then, by \textbf{ComRd}, we know that  $v = \ell_p.x_1.\cdots.x_n.x$. Therefore, this base case holds.
		\item If $E = $ ``\emph{\texttt{$\StructType$ ( $E_1, \ldots, E_n$ )}}'', for a struct type $\StructType$ with parameters\\ $\mathit{par}_1,\ldots, \mathit{par}_n$, then from the interpretation function, we know that $E$ is interpreted as $\interp{E_1};\ldots;\interp{E_n};\cons{\StructType}$. By the structural induction hypothesis, we know that $\interp{E_1};\ldots;\interp{E_n}$ results in the sequence of values $v_1;\ldots;v_n$ at the end of the stack of $p$. Then by the derivation rule \textbf{ComCons}, we know that there exists a label $\ell$ s.t. $\Structs(\ell) = \bot$ and $\Structs'(\ell) = \langle \StructType, \vempty, \vempty, \Env\rangle$, where $\Env = \Env_{\StructType}^0[\mathit{par}_1 \mapsto v_1, \ldots, \mathit{par}_n\mapsto v_n]$. Also by \textbf{ComCons}, we know that $v = \ell$. Therefore, this case holds
		\item If $E = $ ``\emph{\texttt{! $E'$}}'', then by the interpretation function, this gets interpreted as $\interp{E'};\Notc$. Then by the structural induction hypothesis, we know that $\interp{E'}$ results in a value $v'$ at the end of the stack of $s$. Then by derivation rule \textbf{ComNot}, we know that $v = \neg v'$, so this case holds.
		\item If $E = $ ``\emph{\texttt{( $E'$ )}}'', then as the concrete syntax gets converted into an abstract syntax tree, $\interp{E} = \interp{E'}$, as $\interp{E'}$ pushes a value $v'$ to the stack as per the structural induction hypothesis, it follows that $\interp{E}$ also pushes $v'$ to the stack, so $v = v'$. Therefore, this case holds.
		\item If $E = $ ``\emph{\texttt{$E_1$ o $E_2$}}'', for syntactic operator $o$ with semantic equivalent $\circ$, this is interpreted by the interpretation function as $\interp{E_1};\interp{E_2};\Operator(o)$. Then by the structural induction hypothesis, we know that $\interp{E_1};\interp{E_2}$ results in the values $v_1;v_2$ on the stack of $p$. Then by derivation rule \textbf{ComOp}, we know that $v = v_1 \circ v_2$. Therefore, this case holds.
		\item If $E$ creates a new struct instance, then either $E = $ ``\emph{\texttt{$\StructType$ ( $E_1, \ldots, E_n$ )}}'' or a subexpression of $E$ creates a new struct instance. In the second case, $\Stab'=\false^{|\Stab|}$ by the structural induction hypothesis. In the first case, due to the execution of \textbf{ComCons} during the resolution of $E$, $\Stab' =\false^{|\Stab|}$.
	\end{enumerate}
	As the structural induction and all cases within it hold, the lemma holds.
\end{proof}

We can use this lemma to prove the effects of statement executions. Firstly, for constructor statements, note the following:
\begin{corollary}[\Lname constructor execution]\label{lem: constructorex}
	A constructor statement has the same effects as a constructor expression, as defined in case 5 of Lemma~\ref{lem: expressions}, and also resets the stability stack as defined in case 9 of Lemma~\ref{lem: expressions}.
\end{corollary}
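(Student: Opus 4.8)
The plan is to reduce the claim to cases~5 and~9 of Lemma~\ref{lem: expressions} by inspecting how a constructor used as a statement is interpreted. First I would appeal to the interpretation function to observe that a constructor statement, whose body is a constructor expression ``\texttt{$\StructType$ ( $E_1, \ldots, E_n$ )}'', is interpreted as the command sequence $\interp{E_1};\ldots;\interp{E_n};\cons{\StructType}$ exactly as in case~5, the only difference from the expression form being that a statement is executed for its effect and so the value left on the stack is subsequently discarded (e.g.\ by a trailing $\pop$). The single point to confirm here is that the statement form contributes no command beyond those already analysed in Lemma~\ref{lem: expressions}, apart from this inert discard.

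Given that reduction, I would invoke case~5 of Lemma~\ref{lem: expressions} on the subsequence $\interp{E_1};\ldots;\interp{E_n};\cons{\StructType}$ to obtain a fresh label $\ell$ with $\Structs(\ell) = \bot$ and $\Structs'(\ell) = \langle \StructType, \vempty, \vempty, \Env\rangle$, where $\Env = \Env_{\StructType}^0[\mathit{par}_1 \mapsto v_1, \ldots, \mathit{par}_n\mapsto v_n]$ and $v_1,\ldots,v_n$ are the values of the resolved arguments; this is precisely the effect on $\Structs$ ascribed to a constructor expression. I would then invoke case~9 of the same lemma to conclude $\Stab' = \false^{|\Stab|}$, since the execution creates a new struct instance via the rule \textbf{ComCons}.

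It remains to check that the trailing discard is inert with respect to these effects. The $\pop$ removes the label $\ell$ from the top of the stack of $p$ but leaves the struct environment $\Structs'$ unchanged, so the newly created instance and its environment persist; moreover it cannot re-stabilise the program, since \textbf{ComCons} has already reset $\Stab$ to $\false^{|\Stab|}$ and $\pop$ does not touch the stability stack. Hence the constructor statement creates exactly the same instance as the corresponding constructor expression and resets the stability stack, which is what the corollary asserts.

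The main obstacle I anticipate is the first, bookkeeping step: confirming directly from the statement-level semantics that a constructor statement really does reduce to the constructor expression's command sequence followed only by an inert discard, with no additional write or side effect introduced at the statement boundary. Once that reduction is in hand, the corollary is immediate from cases~5 and~9 of Lemma~\ref{lem: expressions}.
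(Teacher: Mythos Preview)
Your proposal is correct and matches the paper's approach: the paper states this corollary without proof, treating it as immediate from cases~5 and~9 of Lemma~\ref{lem: expressions}, which is exactly the reduction you carry out. Your extra care about a trailing discard being inert is reasonable bookkeeping that the paper simply omits; once you confirm from the interpretation function that no additional side-effecting command is introduced at the statement level, the corollary follows as you say.
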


We then prove the update statement execution effects:
\begin{lemma}[\Lname update execution]\label{lem: updateex}
	Let $Z$ be an \Lname statement of the form ``\emph{\texttt{$A$.$a$ := $E$}}'', where $A$ has the syntax ``\emph{\texttt{$a_1$.$\cdots$.$a_x$}}'' with $a,a_1,a_2,a_3,\ldots,\linebreak[4]a_n\in \Id$ and $E$ is an expression. Let label $\ell_\alpha$ be uniquely defined as
	\[\ell_\alpha = \begin{cases}
		p.a_1.a_2.a_3.\cdots.a_n &\text{if } n > 0\\
		\ell_p &\text{if } n = 0
	\end{cases}.\]
	Let the state $P' = \langle \Sched, \Structs', \Stab'\rangle$ be the state resulting from the transition of the last command of $\interp{Z}$. Let $v$ be the value pushed to the stack as a result of resolving $\interp{E}$ (as per Lemma~\ref{lem: expressions}). Let $\Structs(\ell_\alpha) = \langle \StructType_\alpha, \ComList_\alpha, \Stack_\alpha, \Env_\alpha\rangle$ and let $\Structs'(\ell_\alpha) = \langle \StructType_\alpha, \ComList'_\alpha, \Stack'_\alpha, \Env'_\alpha\rangle$. 
	
	Then, if $\ell_\alpha \neq \Labels_0$, $\Env'_\alpha(a) = v$ and if $\Env_\alpha(a) \neq v \land a\in \Par{\StructType_\alpha}$, $\Stab' = \false^{|\Stab|}$.
\end{lemma}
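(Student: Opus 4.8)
The plan is to unfold the update statement into its underlying command sequence via the interpretation function $\interp{\cdot}$, to track the contents of $p$'s stack through the resolution of the target and of the right-hand side using the two preceding lemmas, and then to read off the effect of the final write command from its derivation rule. First I would observe that, by the interpretation function, the statement ``\emph{\texttt{$A$.$a$ := $E$}}'' is interpreted as $\interp{A};\interp{E};\writev{a}$: the commands resolving the target prefix $A$, the commands evaluating $E$, and finally the single write command $\writev{a}$. The only feature of this order I rely on is that both the target label and the evaluated value are present on $p$'s stack before $\writev{a}$ fires.

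Second I would pin down the stack immediately before $\writev{a}$ executes. Since the left-hand side ``\emph{\texttt{$A$.$a$}}'' is exactly a variable expression of the form treated by Lemma~\ref{lem: refres}, and since $\Program$ has no read-write data races (so reference resolution is unique), that lemma gives that after $\interp{A}$ the stack of $p$ is $\Stack_p;\ell_\alpha$, with $\ell_\alpha$ precisely the label of the statement's case split ($\ell_p$ when $n = 0$, and $p.a_1.\cdots.a_n$ otherwise). Applying Lemma~\ref{lem: expressions} to $\interp{E}$ then shows this evaluation pushes the value $v$ on top, so the stack is $\Stack_p;\ell_\alpha;v$ just before the write.

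Third I would apply the write derivation rule for $\writev{a}$ (following the naming pattern of \textbf{ComRd}, presumably \textbf{ComWr}), which pops $v$ and $\ell_\alpha$ and acts on $\Structs(\ell_\alpha)=\langle\StructType_\alpha,\ComList_\alpha,\Stack_\alpha,\Env_\alpha\rangle$. A case analysis on $\ell_\alpha$ delivers both conclusions: when $\ell_\alpha\neq\Labels_0$ the rule sets $\Env'_\alpha(a)=v$, whereas when $\ell_\alpha=\Labels_0$ it leaves the $\nil$-instance untouched because parameters of $\nil$-instances are immutable in \Lname, which is exactly why the first conclusion is guarded by $\ell_\alpha\neq\Labels_0$. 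For the stability claim, the rule resets the stability stack to $\false^{|\Stab|}$ precisely when a parameter's stored value actually changes, i.e. when $a\in\Par{\StructType_\alpha}$ and $\Env_\alpha(a)\neq v$; for a local-variable write or a no-op write, $\Stab$ is left unchanged. I would also note that if $E$ itself constructs an instance, case~9 of Lemma~\ref{lem: expressions} already forces $\Stab'=\false^{|\Stab|}$, consistently with this.

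I expect the main obstacle to be matching the stated effects against the precise side conditions of the write rule and carrying out the resulting case split cleanly: separating (i) $\ell_\alpha=\Labels_0$ from $\ell_\alpha\neq\Labels_0$, and (ii) a value-changing parameter write (which triggers the reset) from a local-variable or no-op write (which does not). If the \Lname semantics realises these through several distinct derivation rules for $\writev{a}$, the real work is verifying that each rule's conclusion reproduces exactly the two guarded statements; none of it is conceptually deep, but it must be checked against the exact formulation of the write rule in the semantics.
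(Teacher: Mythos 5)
Your overall plan coincides with the paper's proof: unfold $\interp{Z}$ via the interpretation function, use Lemma~\ref{lem: refres} to get $\ell_\alpha$ onto the stack, Lemma~\ref{lem: expressions} to get $v$ onto the stack, and then read off the effect of the final write command from the derivation rule \textbf{ComWr}. One factual slip: the interpretation order is $\interp{E};\interp{a_1.\cdots.a_n};\writev{a}$ — the right-hand side is evaluated \emph{before} the reference is resolved — so the stack just before the write is $\Stack_p;v;\ell_\alpha$ rather than $\Stack_p;\ell_\alpha;v$. Your hedge that "only presence on the stack matters" is not really available under a stack discipline, where the write rule pops its operands in a fixed order, but this is a cosmetic mismatch, not a conceptual one.

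The genuine gap is in the stability claim. You argue that \textbf{ComWr} resets the stability stack "precisely when a parameter's stored value actually changes, i.e.\ when $a\in\Par{\StructType_\alpha}$ and $\Env_\alpha(a)\neq v$". These two conditions are not the same: $\Env_\alpha(a)$ is the parameter's value in the state $P$, \emph{before} $\interp{Z}$ begins executing, while the reset condition of the write rule compares $v$ against the parameter's value \emph{at the moment the write fires}. The lemma only assumes the absence of read-write data races; write-write races are still possible (they are exactly what Lemma~\ref{lem: update} handles later). So another struct instance may write to $\ell_\alpha.a$ between $P$ and $p$'s write. In particular it may itself write $v$, in which case $p$'s write is a no-op by your criterion and does not reset $\Stab$ — yet the lemma still demands $\Stab' = \false^{|\Stab|}$. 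The paper closes precisely this case: if the intervening writer wrote $v$, then \emph{its} \textbf{ComWr} changed the value from $\Env_\alpha(a)\neq v$ to $v$ and performed the reset; if it wrote anything else, then $p$'s \textbf{ComWr} changes the value and performs the reset. Your proof needs this case split (or an explicit argument excluding concurrent writers) to go through; as written, the stability conclusion is only established for interleaving-free executions.
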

\begin{proof}
	We know that $\interp{Z} = \interp{E};\interp{a_1;\cdots;a_x};\writev{a}$ by the definition of the interpretation function. From Lemma~\ref{lem: expressions} we know that through $\interp{E}$, $v$ is put on the stack first. Then, by Lemma~\ref{lem: refres}, we know that the result of $\interp{a_1;\cdots;a_x}$ is that $\ell_\alpha$ is pushed on the stack. Then if $\ell_\alpha \neq \Labels_0$, we know through the derivation rule \textbf{ComWr} that $\Env_\alpha(a) = v$. If $\Env_\alpha(a)\neq v \land a\in\Par{\StructType_\alpha}$, we know that the value of $a$ before \textbf{ComWr} can either still be $\Env_\alpha(a)$ or it can have been written to by another struct instance $p'$, also using a \textbf{ComWr} transition. In the first case, it follows from \textbf{ComWr} that $\Stab' = \false^{|\Stab|}$. In the second case, if the other struct instance writes $v$ to $a$, $\Stab' = \false^{|\Stab|}$ due to the \textbf{ComWr} transition done by $p'$, and if not, then $\Stab' = \false^{|\Stab|}$ due to the \textbf{ComWr} transition of $p$. In any case, $\Stab' = \false^{|\Stab|}$.	 
\end{proof}

The above lemma also suffices for assignment statements:
\begin{corollary}[Assignment statements]\label{lem: assignmentex}
	Lemma~\ref{lem: updateex} also holds for statements of the form ``\emph{\texttt{$T$ $a$ := $E$}}'', with $T\in\SynTypes$ and ``\emph{\texttt{$a$ := $E$}}''.
\end{corollary}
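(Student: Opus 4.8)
The plan is to reduce both assignment forms to the $n = 0$ case of Lemma~\ref{lem: updateex} by comparing interpretations, so that essentially no new semantic reasoning is needed. The observation driving the proof is that an update statement ``\texttt{$A.a := E$}'' with an empty reference prefix $A$ already writes to $a$ in the executing instance $p$: by Lemma~\ref{lem: refres}, the case $n = 0$ resolves the target label to $\ell_\alpha = \ell_p$. Thus if I can show that the bare assignment ``\texttt{$a := E$}'' is interpreted identically to this $n = 0$ update, the corollary follows immediately by instantiating Lemma~\ref{lem: updateex}.

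First I would appeal to the interpretation function to compute $\interp{a := E}$. As in the proof of Lemma~\ref{lem: updateex}, the right-hand side is evaluated first and the target reference is resolved before the write, giving $\interp{a := E} = \interp{E};\push{\this};\writev{a}$. This is exactly $\interp{A.a := E}$ with $A$ empty, i.e.\ the command sequence for which Lemma~\ref{lem: refres} places the label $\ell_p$ on the stack. Consequently $\ell_\alpha = \ell_p$, and the argument of Lemma~\ref{lem: updateex} applies verbatim: $\interp{E}$ pushes the value $v$, $\push{\this}$ pushes $\ell_p$, and the final \textbf{ComWr} transition sets $\Env'_{p}(a) = v$ whenever $\ell_p \neq \Labels_0$, while the stability-reset conclusion fires under the same condition $\Env_p(a) \neq v \land a \in \Par{\StructType_p}$.

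Next I would handle the declaration form ``\texttt{$T\ a := E$}''. Here the only difference from the bare assignment is the type annotation $T \in \SynTypes$, which governs the static typing and the scope of the freshly introduced local variable $a$ but emits no additional runtime command in the interpretation. Hence $\interp{T\ a := E}$ again reduces to $\interp{E};\push{\this};\writev{a}$, and the conclusions of Lemma~\ref{lem: updateex} transfer identically. As a consistency check I would note that when $a$ is a newly declared local variable rather than a struct parameter, $a \notin \Par{\StructType_p}$, so the guard in Lemma~\ref{lem: updateex} correctly leaves the stability stack untouched; the value conclusion $\Env'_p(a) = v$ still holds (for $\ell_p \neq \Labels_0$) independently of whether $a$ had a prior value.

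The main obstacle is purely a matter of confirming that the interpretation function behaves as claimed on these two syntactic forms — in particular, that the declaration ``\texttt{$T\ a := E$}'' produces no runtime command beyond those of ``\texttt{$a := E$}'', and that an empty reference prefix is interpreted as $\push{\this}$ rather than being elided. Both are readily verified against the definition of $\interp{\cdot}$ and Lemma~\ref{lem: refres}; once established, the corollary is an immediate specialization of Lemma~\ref{lem: updateex} with $\ell_\alpha = \ell_p$, requiring no further reasoning about the operational semantics.
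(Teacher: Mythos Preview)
Your proposal is correct and follows essentially the same approach as the paper: reduce both assignment forms to the $n=0$ instance of Lemma~\ref{lem: updateex} by observing that their interpretations coincide (the paper states $\interp{\text{``\texttt{$T$ $a$ := $E$}''}} = \interp{\text{``\texttt{$a$ := $E$}''}}$ directly), and then remark that the stability stack is unaffected because $a$ is a local variable rather than a parameter. The paper is terser and appeals to the static syntax requirements to assert that $a$ cannot be a parameter, whereas you frame this as a consistency check on the guard $a\in\Par{\StructType_p}$, but the substance is the same.
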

\begin{proof}
	As $\interp{\text{``\emph{\texttt{$T$ $a$ := $E$}}''}} = \interp{\text{``\emph{\texttt{$a$ := $E$}}''}}$, the effects of executing ``\emph{\texttt{$T$ $a$ := $E$}}'' are the same as executing ``\emph{\texttt{$a$ := $E$}}'' (with $n = 0$). The stability stack will not be updated, as $a$ cannot be a parameter according to our static syntax requirements.
\end{proof}

Lastly, we prove the effects of executing an if-then statement:
\begin{lemma}[If-then Statements]\label{lem: ifthen}
	Let $Z$ be a statement of the form ``\emph{\texttt{if $E$ then \{ $S$ \}}}'', where $S$ is a list of statements and $E$ is an expression. 
	
	Let the state $P' = \langle \Sched, \Structs', \Stab'\rangle$ be the state resulting from the transition of the last command of $\interp{Z}$, and let $\Structs'(\ell_p) = \StructType_p, \ComList'_p, \Stack'_p, \Env'_p\rangle$. Let $v$ be the value pushed to the stack as a result of resolving $\interp{E}$ (as per Lemma~\ref{lem: expressions}).
	
	Then either $v = \true$ and $\ComList'_p = \interp{S};\ComList_p$ or $v = \false$ and $\ComList'_p = \ComList_p$. 
\end{lemma}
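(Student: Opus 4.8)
The plan is to mirror the structure of the preceding auxiliary lemmas: unfold the interpretation function for the if-then statement into a command sequence, apply Lemma~\ref{lem: expressions} to the guard, and then case-split on the two derivation rules governing the $\Ifc{\cdot}$ command.

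First I would invoke the definition of the interpretation function to write $\interp{Z} = \interp{E};\Ifc{\interp{S}}$, so that the command list of $p$ in state $P$ has the form $\interp{E};\Ifc{\interp{S}};\ComList_p$, where $\ComList_p$ is the continuation remaining after the entire if-statement. Resolving $\interp{E}$ is handled entirely by Lemma~\ref{lem: expressions}: letting $P_1$ be the state after the last command of $\interp{E}$, the stack of $p$ in $P_1$ is $\Stack_p;v$ and its command list is $\Ifc{\interp{S}};\ComList_p$. Since the guard of a well-typed if-statement has type $\SynBool$, we have $v\in\{\true,\false\}$, so the subsequent case split is exhaustive. The no-read-write-data-race assumption on $\Program$, already exploited in Lemma~\ref{lem: expressions}, guarantees that $v$ is the unique readable value, so there is genuinely a single case to resolve at each branch.

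It then remains to fire the $\Ifc{\interp{S}}$ command from $P_1$, where the top of the stack selects between the two derivation rules for the command. If $v = \true$, the rule pops the guard and prepends the branch body, yielding $\ComList'_p = \interp{S};\ComList_p$; if $v = \false$, the corresponding rule pops the guard and leaves the continuation untouched, yielding $\ComList'_p = \ComList_p$. In either case this is exactly the claimed conclusion, which finishes the proof.

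I expect no genuine obstacle here, since the lemma is a direct reading of the operational rules and all the delicate reasoning — that $\interp{E}$ deposits exactly one value on the stack and that this value is uniquely determined — has already been discharged in Lemma~\ref{lem: expressions}. The only point needing minor care is the stack bookkeeping: confirming that the value consumed by $\Ifc{\cdot}$ is precisely the $v$ produced by the guard rather than some earlier entry, which holds because $\interp{E}$ pushes $v$ on top of the unchanged prefix $\Stack_p$.
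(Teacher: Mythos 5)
Your proposal is correct and follows essentially the same route as the paper's proof: unfold $\interp{Z} = \interp{E};\Ifc{\interp{S}}$ via the interpretation function, use well-typedness (and Lemma~\ref{lem: expressions}) to conclude the guard resolves to a boolean $v$ on top of the stack, then case-split on the two derivation rules for the if-command (\textbf{ComIfT} and \textbf{ComIfF} in the paper) to obtain the two claimed outcomes. Your extra remarks on data-race freedom and stack bookkeeping are sound but not needed beyond what the paper's argument already covers.
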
 
\begin{proof}
	We know that $\interp{Z} = \interp{E};\Ifc{\interp{S}}$ by the definition of the interpretation function. By our assumption of well-typedness, we know that the value $v$ to which $\interp{E}$ resolves is a boolean value, and therefore the value at the end of the stack after resolving $\interp{E}$ will be either $\true$ or $\false$. Then if $v = \true$, we know by the derivation rule \textbf{ComIfT} that $\ComList'_p = \interp{S};\ComList_p$, and if $v = \false$, we know by the derivation rule \textbf{ComIfF} that $\ComList'_p = \ComList_p$.
\end{proof}
We have now proven the effects of every type of statement. For if-statements and constructor statements, the result of the statement are permanent during the execution of a step $s$. Assignment statements can only work with local variables, of which the values are irrelevant at the end of $s$. We do however need to prove what we can guarantee about updated parameters after the execution of an update:

\begin{lemma}[\Lname update results]\label{lem: update}
	Let $s$ be a step in $\Program$ and let $Z$ be an update statement s.t. $p$ executing $Z$ updates a parameter $p'.x$ with type $T$ of some struct instance $p'$ to a value $b$ during $s$ (along Lemma~\ref{lem: updateex}). Let $a$ be the original value of $p'.x$. Let $P$ be a state during $s$ after the execution of $Z$ by $p$ and before the execution of the statement after $Z$ by $p$. Then all of the following holds:
	\begin{itemize}
		\item[a.] If $p'$ is a $\nil$-instance, $p'.x = a = \defaultVal(T)$.
		\item[b.] If $p'$ is not a $\nil$-instance:
		\begin{itemize}
			\item[i.] If $p'.x$ is not involved in a write-write data race, $p'.x = b$.
			\item[ii.] If $p'.x$ is involved in a write-write data race, let $N$ be the set of all values written to $p'.x$ during $s$ by all data elements involved in the write-write data race. Then $p'.x\in N$.
		\end{itemize}
	\end{itemize} 
	Additionally, we know that if $a \neq b$, the stability stack is reset.
\end{lemma}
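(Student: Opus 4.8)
The plan is to obtain this lemma as a refinement of Lemma~\ref{lem: updateex}, which already pins down the immediate effect of $p$ executing $Z$, and then to layer on top of it the extra reasoning needed to handle (i) the immutability of $\nil$-instance parameters and (ii) the interleaving of other struct instances between $p$'s write and the state $P$. First I would apply Lemma~\ref{lem: updateex}, taking $\ell_\alpha$ to be the label to which the reference $a_1.\cdots.a_n$ resolves, i.e.\ the label of $p'$. This yields that, at the moment $p$ finishes $Z$, the write sets $p'.x = b$ precisely when $\ell_\alpha \neq \Labels_0$, and that whenever the written value $b$ differs from the original value $a$ of the parameter $x \in \Par{\StructType_{p'}}$, the stability stack is reset to $\false^{|\Stab|}$. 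The concluding claim (that the stack is reset when $a \neq b$) then follows at once, since a reset of the stability stack cannot be undone within a single step $s$.

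For part~(a) I would invoke the \Lname semantics directly: the parameters of a $\nil$-instance are fixed to their default values and cannot be written to by any instance. Concretely, $p'$ being the $\nil$-instance means $\ell_\alpha$ is a $\nil$-label, so the side condition $\ell_\alpha \neq \Labels_0$ of Lemma~\ref{lem: updateex} fails and the write in $Z$ has no effect; the same immutability forbids every other instance from altering $p'.x$ as well. Hence $p'.x$ retains its default value throughout $s$, giving $p'.x = a = \defaultVal(T)$.

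For part~(b) I would fix the window between $p$'s execution of $Z$ and its next statement and ask which instances can write to $p'.x$ inside it. By Lemma~\ref{lem: updateex} (now with $\ell_\alpha \neq \Labels_0$) the write in $Z$ has set $p'.x = b$, so $p'.x$ has been written at least once before $P$. In case~(b.i) I would argue by contradiction: if some instance $q \neq p$ wrote to $p'.x$ during $s$, then $p$ and $q$ would be two distinct instances writing the same parameter, i.e.\ a write-write data race on $p'.x$, contradicting the hypothesis; thus $p$ is the sole writer, and since $p$ performs no further statement before $P$, the value $b$ survives, so $p'.x = b$. In case~(b.ii) I would observe that any two distinct writers of $p'.x$ form a write-write data race, so \emph{every} instance that writes $p'.x$ is among those involved in the race, and therefore every value ever written to $p'.x$ lies in $N$. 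The value of $p'.x$ at $P$ is the one deposited by the most recent $\writev{x}$ preceding $P$ in the given interleaving, which is performed by one of these writers and is hence a member of $N$, yielding $p'.x \in N$.

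The main obstacle is the interleaving reasoning in part~(b): the operational semantics permit other struct instances to take steps between $p$'s statements, so the argument must rule out (case~(b.i)) or exhaustively characterise (case~(b.ii)) the writes that can land on $p'.x$ in the window ending at $P$. The delicate point is translating \emph{only one writer} versus \emph{several writers} into the formal data-race definition and confirming that $N$ really collects all attainable values; this rests on the fact, used above, that any pair of distinct writers to the same parameter constitutes a write-write data race, so no writer can fall outside the set generating $N$.
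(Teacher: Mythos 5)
Your proposal is correct and follows essentially the same route as the paper's own proof: apply Lemma~\ref{lem: updateex} for the immediate effect of the write, handle the $\nil$-instance case via the immutability of $\nil$-instance parameters (the paper cites the rule \textbf{ComWrSkip} for this), and split case~(b) on whether $p'.x$ is in a write-write data race, using the absence of read-write races in $\Program$ to conclude that any second writer would constitute such a race. If anything, your treatment is slightly more complete, since you explicitly discharge the stability-stack claim (via Lemma~\ref{lem: updateex} plus the observation that a reset persists within a step), which the paper's proof leaves implicit.
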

\begin{proof}
	If $p'$ is a $\nil$-instance, then by the rule \textbf{ComWrSkip} and by the initialization of $\nil$-instances, we know that $p'.x = a = \nil$ after the execution of $Z$ by $p$.
	If $p'$ is not a $\nil$-instance, and $p'.x$ is not involved in a write-write data race, $p'.x$ is not involved in any data race, as $\Program$ does not have read-write data races. Then as no other element other than $p$ can have written to $p'.x$ during or after the execution of $Z$ by $p$ and $b$ is deterministic during the execution of $Z$ (as there are no read-write data races), $p'.x = b$. If $p'$ is not a $\nil$-instance and $p'.x$ is in a write-write data race, as \Lname does not allow for nondeterminism in a single data element, this data race must be between different data elements. As these can execute their update statements in any order, any value in $N$ can be the last value written to $p'.x$ before $p$ executes the statement after $Z$, so $p'.x\in N$.
\end{proof}

This extends to step executions:
\begin{corollary}[\Lname parameters after a step]\label{cor: steppar}
	Let $s$ be a step in $\Program$ and let $P$ be a state resulting from an execution of $s$. Let $Z$ be the last update statement in $s$ of some parameter $p'.x$ with type $T$ of some struct instance $p'$ by $p$, which updates $p'.x$ to a value $b$. Let the value of $p'.x$ before the execution of $s$ be $a$. Then in $P$:
	\begin{itemize}
		\item[a.] If $p'$ is a $\nil$-instance, $p'.x = a = \defaultVal(T)$.
		\item[b.] If $p'$ is not a $\nil$-instance:
		\begin{itemize}
			\item[i.] If $p'.x$ is not involved in a write-write data race, $p'.x = b$.
			\item[ii.] If $p'.x$ is involved in a write-write data race, let $N$ be the set of all values written to $p'.x$ during $s$ by all data elements involved in the write-write data race. Then $p'.x\in N$.
		\end{itemize}
	\end{itemize} 
	Additionally, we know that if $p'.x$ has had its value changed during the execution of $s$, the stability stack has been reset during the execution of $s$.
\end{corollary}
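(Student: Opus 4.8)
The plan is to obtain the corollary as a direct extension of Lemma~\ref{lem: update}, bridging the gap between the state reached immediately after the last update $Z$ of $p'.x$ by $p$ and the final state $P$ at the end of $s$. First I would apply Lemma~\ref{lem: update} at the state $P_Z$ that results right after $p$ executes $Z$ and before $p$'s next statement; this already establishes the claimed value of $p'.x$ in each of the three cases \emph{at $P_Z$}. The remaining work is to show that, from $P_Z$ onward until the step terminates at $P$, the value of $p'.x$ either cannot change at all or changes only in a way consistent with the stated membership in $N$. Since the only semantic rule that alters a parameter is \textbf{ComWr} (with \textbf{ComWrSkip} leaving $\nil$-instance parameters untouched), it suffices to classify which struct instances may still write to $p'.x$ after $P_Z$.

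For part a, if $p'$ is a $\nil$-instance then every attempt to write $p'.x$ resolves via \textbf{ComWrSkip}, so $p'.x$ is never modified during the whole of $s$ and, by the initialization of $\nil$-instances, equals $\defaultVal(T)$ throughout; hence $p'.x = a = \defaultVal(T)$ holds in $P$ exactly as in $P_Z$. For part b.i, since $\Program$ has no read-write data races and $p'.x$ is assumed not to be in a write-write data race, $p'.x$ is in no data race at all during $s$; consequently no struct instance other than $p$ writes to $p'.x$, and because $Z$ is chosen as the \emph{last} update of $p'.x$ by $p$, the value $b$ established at $P_Z$ persists unchanged to $P$. For part b.ii, every modification of $p'.x$ throughout $s$ is a \textbf{ComWr} performed by a participant of the write-write data race, so the value of $p'.x$ in $P$ is the last such written value, which by definition lies in $N$ (and since $p$ itself writes $b$ through $Z$, $p$ is indeed a participant). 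The stability claim follows because any value change of $p'.x$ during $s$ is realized by some \textbf{ComWr} whose old and new values differ on a parameter, which by Lemma~\ref{lem: updateex} resets the stability stack to $\false^{|\Stab|}$ at that point in the execution.

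The main obstacle is the bookkeeping in the interleaved execution: I must justify that the value of $p'.x$ in the final state $P$ coincides with the \emph{last} value written to $p'.x$ during $s$, and that the set of instances that may write to $p'.x$ after $P_Z$ is precisely the write-write data-race participants. This requires reconciling the data-race definition, which quantifies over all accesses during the entire execution of $s$ from an idle state, with the local reasoning of Lemma~\ref{lem: update}, which concerns only the transition of $Z$; and it relies crucially on choosing $Z$ to be the last update by $p$, so that no further writes by $p$ itself remain and only cross-instance writes must be accounted for.
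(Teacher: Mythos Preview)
Your proposal is correct and aligned with the paper's approach: the paper states this result as a corollary of Lemma~\ref{lem: update} with no explicit proof beyond the remark ``This extends to step executions'', and your argument is precisely the natural unpacking of that extension. Your case analysis and the bridging from the post-$Z$ state to the terminal state $P$ via the observation that only \textbf{ComWr} can alter $p'.x$, together with $Z$ being $p$'s last such update, is exactly what the implicit extension requires.
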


\subsection{Turing Complete Lemmas Proofs}\label{sec:proof}
In this section, we will prove Lemma~\ref{contract: init} and \ref{contract: transition} in more detail, using the auxiliary lemmas of the previous section.
Recall Lemma~\ref{contract: init}:
\initcontract*
\begin{proof}
	First, note that the idle state at the start of executing $\impl$ is the initial state of $\impl$. The initial state for $\impl$, as defined in the \Lname semantics, contains the schedule of $\impl$, the $\nil$-instances of all structs, and a stability stack. The stability stack has no bearing on this proof, and will be disregarded.
	
	We need to show that $p_1$ has implementation configuration $(q_0, t_S)$. To show that, we first prove that $p_1$ contains only one non-$\nil$ struct instance of \textit{Control} and that its \textit{state} parameter is set to $q_0$.
	To prove this, we can assume the \textit{init}-code is executed without nondeterministic behaviour, due to Lemma~\ref{lem: ndinit}. Only the $\nil$-instance of \textit{Control} executes \textit{init} (as it is the only instance to exist in $p_0$). The step code makes only a single \textit{Control}-instance, and as the code is deterministic and the $\nil$-instance executes it, we know that this means only one \textit{Control}-instance is present in $p_1$, following Lemma~\ref{lem: constructorex}, which we will call $c$. Also following Lemma~\ref{lem: constructorex}, we know that the \textit{state} parameter of $c$ is set to $q_0$ (represented by integer $0$).
	
	We then prove that the function made according to Definition~\ref{def: implconf} in $p_1$ from the \textit{TapeCells} is $t_S$. To do this, we first prove that in $p_1$, there exists a \textit{TapeCell} for all symbols $s_i\in S$, and no others, s.t. every \textit{TapeCell} $s_i$ is be connected to $s_{i-1}$ and $s_{i+1}$ (if they exist) through parameters \textit{left} and \textit{right} respectively. Then we prove that the \textit{head} parameter of $c$ will be set to the \textit{TapeCell} of $s_0$. 
	
	The first follows from the template in Listing~\ref{ex:init}, which we can follow sequentially due to Lemma~\ref{lem: ndinit}. According to Lemma~\ref{lem: assignmentex}, the first part makes one \textit{TapeCell} instance for every $s_i\in S$, and according to Corollary~\ref{lem: update} and Lemma~\ref{lem: ndinit}, these are then connected to the correct \textit{left} and \textit{right} neighbours.
	It follows from the listing and Lemma~\ref{lem: constructorex} that \textit{head} parameter of $c$ will be set to the \textit{TapeCell} for $s_0$.
	
	Then the lemma holds: the implementation configuration of $p_1$ is $(q_0, t_S)$. \end{proof}
	
Now recall Lemma~\ref{contract: transition}:
\transitioncontract*
\begin{proof}
	Let $p$ and $(q, t)$ be as defined in the lemma. Then by definition of $\impl$, there exists a single transition in $p$ for $\impl$ iff $\delta(q, t)$ is defined. Additionally, $\delta$ is a function, so it is always uniquely defined for $(q, t)$. 
	
	Let $\delta(q, t) = (q', s', D)$. W.l.o.g., let $D = R$ (the proof of $D = L$ is analogous). Then as a transition in $T$ is deterministic, by Definition~\ref{def: conf}, the resulting state of taking a transition from $(q, t)$ is the state $(q', t')$, with $$t'(i) = \begin{cases}
		s'&\text{if } i = -1\\
		t(i+1) &\text{otherwise}
	\end{cases}.$$
	
	Taking the transition in $p$ for $\impl$ is also deterministic (Lemma~\ref{lem: dettrans}), and therefore we can walk through the statements of the clause to determine its effect. By definition of $\impl$, this clause is based on the template shown in Listing~\ref{ex:clause}. Let $c$ be the single \textit{Control} instance of $p$, and let $h$ be the \textit{TapeCell} instance which is referenced in the \textit{head} parameter of $c$. Due to Lemma~\ref{lem: update}, the result of the transition is that the \textit{state} of $c$ is updated to $q'$, the \textit{symbol} of $h$ is updated to $s'$, the \textit{accepting} parameter of $c$ is updated to whether $q'\in F$ and that the \textit{head} parameter of $c$ shifts one \textit{TapeCell} to the right (making a new \textit{TapeCell} if required, by Lemma~\ref{lem: ifthen} and Lemma~\ref{lem: update}).
	
	Creating a function of the \textit{TapeCells} as in Definition~\ref{def: implconf} then results in function $t''$, s.t. $t''(-1) = s'$ and $t''(i) = t(i+1)$ for all $i \neq -1$, which is equal to $t'$. Then, by Definition~\ref{def: implconf}, the implementation configuration of the resulting state is $(q', t')$.
	Therefore the lemma holds.
	\end{proof}
\end{document}